\def\a{\mathbf{a}}
\def\b{\mathbf{b}}
\def\c{\mathbf{c}}
\def\e{\mathbf{e}}
\def\g{\mathbf{g}}
\def\o{\mathbf{1}}
\def\p{\mathbf{p}}
\def\u{\mathbf{u}}
\def\v{\mathbf{v}}
\def\x{\mathbf{x}}
\def\y{\mathbf{y}}
\def\z{\mathbf{z}}
\def\xhat{\hat{\mathbf{x}}}
\def\yhat{\hat{\mathbf{y}}}
\def\xhats{\hat{x}}
\def\A{\mathbf{A}}
\def\C{\mathbf{C}}
\def\D{\mathbf{D}}
\def\G{\mathbf{G}}
\def\I{\mathbf{I}}
\def\P{\mathbf{P}}
\def\Q{\mathbf{Q}}
\def\R{\mathbf{R}}
\def\T{\mathbf{T}}
\def\diag{\mathbb{D}}
\def\mean{\mathbb{E}}
\def\pr{\mathbb{P}}
\newtheorem{thm}{Theorem}
\newtheorem{lem}{Lemma}
\newcommand{\SUM}[4]{%
{\displaystyle \sum_{{#1}={#2}}^{#3}\!{\!#4}
}%
}
\newcommand{\PROD}[4]{%
{\displaystyle \prod_{{#1}={#2}}^{#3}\!{\!#4}
}%
}
\newcommand{\INT}[3]{%
{\displaystyle \int\limits_{#1}^{#2}\!{\!#3}
}%
}
\newcommand{\NrmOne}[1]{\|{#1}\|_1}
\newcommand{\NrmTwo}[1]{\|{#1}\|_2}
\DeclareMathOperator*{\argmin}{arg\,min}
\def\rk[#1]{{\color{red}\sffamily\small\em $\Rightarrow$ #1 $\Leftarrow$}}
\title{Detection Estimation and Grid matching of Multiple Targets with Single Snapshot Measurements}
\author{Rakshith Jagannath
\thanks{The author is with the Department of Electrical Engineering, I.I.T Madras, Chennai, India (e-mail:ee13d005@ee.iitm.ac.in)}}
\begin{document}
\maketitle
\begin{abstract}
In this work, we explore the problems of detecting the number of narrow-band, far-field targets and estimating their corresponding directions from single snapshot measurements. The principles of sparse signal recovery (SSR) are used for the single snapshot detection and estimation of multiple targets. In the SSR framework, the DoA estimation problem is grid based and can be posed as the lasso optimization problem. However, the SSR framework for DoA estimation gives rise to the grid mismatch problem, when the unknown targets (sources) are not matched with the estimation grid chosen for the construction of the array steering matrix at the receiver. The block sparse recovery framework is known to mitigate the grid mismatch problem by jointly estimating the targets and their corresponding offsets from the estimation grid using the group lasso estimator. The corresponding detection problem reduces to estimating the optimal regularization parameter ($\tau$) of the lasso (in case of perfect grid-matching) or group-lasso estimation problem for achieving the required probability of correct detection ($P_c$). We propose asymptotic  and finite sample test statistics for detecting the number of sources with the required $P_c$ at moderate to high signal to noise ratios. Once the number of sources are detected, or equivalently the optimal $\hat{\tau}$ is estimated, the corresponding estimation and grid matching of the DoAs can be performed by solving the lasso or group-lasso problem at $\hat{\tau}$. 
\end{abstract}
\begin{IEEEkeywords}
Sparse Signal Recovery, Multiple Hypothesis Testing, Test Statistics, Probability of Correct Detection, Threshold, Single Snapshot, Grid Matching, Direction of Arrival Estimation, Lasso, Group Lasso 
\end{IEEEkeywords} 
\section{Introduction}
\label{intro}
Detection, estimation and tracking of targets are the primary functions of radar-based localization systems. A main challenge frequently faced by these systems is the problem of restricted measurements due to limited availability of sensors. In such cases, it is essential to exploit the sparsity of targets in the array manifold (spatial domain) for the purpose of detection and estimation with as few sensors as possible. In this work, we focus on the problems of detecting the number of narrow-band, far-field targets and estimating their corresponding direction of arrivals (DoAs) from single snapshot measurements. 

The signal model used for detection and estimation in single snapshot DoA problem models the observed measurements as a continuous and non-linear function of the DoAs \cite{van2002detection}. As the DoAs are sparse in the spatial domain, sparse signal recovery (SSR) based techniques can be used for detection and estimation. In the SSR framework, the continuous DoA signal model can be approximated into three classes, namely, on-grid, off-grid and grid-less \cite{four}. 

In the on-grid SSR framework, the signal model for estimation is obtained by the discretization of the continuous DoAs over a selected interval to construct the array steering matrix over an estimation grid of DoAs. The true DOA targets are then assumed to lie on the estimation grid and SSR based estimators have been proposed for DoA estimation. These estimators essentially use the lasso estimator in its various forms for estimation of the DoAs \cite{complexlars}. However, the lasso regularization parameter ($\tau$), which controls the number of sources that are estimated is usually chosen empirically. In the case of sparse greedy algorithms like orthogonal matching pursuit (OMP) and its variants, the number of sources is assumed to be known apriori and then the estimation is performed \cite{arx}. 

For the case of a single source in noise model in the on-grid SSR framework, the estimate of the regularization parameter, $\hat{\tau} = \sigma\sqrt{-\ln(P_f)}$ for a given probability of false alarm $P_f$ and noise variance $\sigma$, was obtained in \cite{fuchspaper} using the generalized likelihood ratio test (GLRT). However, for multiple targets, it is well-known that the GLRT selects the largest model \cite{kaydetection}. Algorithms based on cross-validation and information criteria principles like Bayesian information criteria and minimum description length have been proposed in \cite{cs_cv,Mosesequivalence,mdl,OSMDL}. But, these algorithms are known to suffer in detection performance for small number of snapshots and are mostly not even applicable for the single snapshot case \cite{buckley}. Also, the relationship between $\tau$ and the probability of correct detection, $P_c$ (or $P_f$) have not been obtained for these algorithms. A number of asymptotic results (for large measurements) which are the SSR counterparts to the martingale stability theorem \cite{mst} derived for maximum likelihood estimation framework exist in the literature \cite{chenbp,bnbut}, wherein the optimal regularization parameter ($\hat{\tau}$) is derived to minimize the lasso estimation error. But, small estimation errors does not necessarily mean that sparsity and support of the estimate is same as the original parameter, which is required to control $P_c$ (or $P_f$) in the detection framework. In the related framework of sequential hypothesis testing, family-wise error rate control procedures and the Benjamini-Hojberg procedure and its variants have been used for controlling the false discovery rates and $p$-values (these quantities can be related to $P_c$). However, to our knowledge, most of the results are asymptotic in measurements and offer average rate control with respect to (w.r.t) $p$-values for large measurements. Hence these are useful mostly for the multiple snapshot DoA detection and estimation. In \cite{siglass}, the co-variance test statistics has been proposed for real measurements to obtain the optimal $\tau$. However, the authors obtain an asymptotic (in  the number of measurements) distribution for the co-variance test statistics, which can then be used to obtain the optimal $\tau$ for an approximate $P_c$. 

The on-grid SSR framework for DoA estimation gives rise to the grid mismatch problem when the unknown targets (sources) do not lie in the estimation grid, chosen for the construction of the array steering matrix at the receiver. The block sparse recovery framework is known to mitigate the grid mismatch problem by jointly estimating the targets and their corresponding off-sets from the estimation grid using the group-lasso estimator or its variants. The corresponding detection problem reduces to estimating the optimal regularization parameter ($\tau$) of the group-lasso estimation problem for achieving the required probability of correct detection ($P_c$). A number of estimation algorithms have been proposed for joint DoA estimation and grid matching in the block sparse recovery framework using second order cone programming, semi-definite programming and greedy algorithms \cite{elsTLS,gengmm,Teke-2014}. But to our knowledge, the problem of detection of the number of sources has not been sufficiently explored.

The grid-less methods for DoA estimation such as MUSIC and ESPRIT traditionally require the knowledge of the number of sources for estimation of DoAs and an estimate of the measurement co-variance matrix, which in-turn requires multiple snapshots. Hence, these cannot be used for detection and estimation of DoAs with single snapshot measurements. Recently, other sub-space based algorithms for single snapshot DoA estimation have been proposed in \cite{four,l1svd,ssmusic}, but they all require the knowledge of the number of sources and hence do not detect the number of sources from the measurements. Since we work with single snapshot measurements, beam-formers can be used only for detecting a single source, but these techniques cannot be used for detecting multiple sources with adequate performance \cite{ssdoar}. 

In this work, we explore the problem of finding the relationship between $\tau$ and the detection performance metrics like the probability of correct detection ($P_c$), the probability of mis-detection ($P_m$) and the probability of false alarm ($P_f$). Specifically, we propose finite sample and asymptotic test statistics which can be used at moderate to high SNRs to obtain the optimal $\tau$ for a given $P_c$ with varying degrees of performance. This is accomplished by comparing the test statistics to a threshold which is obtained by inverting the cumulative distribution function (c.d.f) of the proposed test statistics. Finally, we compare the performance of all these tests through simulations and discuss their merits. 

\emph{Organization and Notations:} We use bold lower case letters to denote vectors ($\x$) and bold upper case letters to denote matrices ($\A$). $\|\x\|_{\infty}$, $\|\mathbf{x}\|_1$ and $\|\mathbf{x}\|_2$ denote the $l_{\mathsmaller{\infty}}$, $l_1$ and $l_2$ norms of a vector $\mathbf{x}$ respectively. $\mathbf{x}^{H}$ denotes the Hermitian of $\mathbf{x}$. $\diag(\x)$ denotes a diagonal matrix with entries of $\x$ as the diagonal elements, $\pr(.)$ denotes probability and $\mean(.)$ denotes expectation. $\x\odot\y$ represents the Hadamard product (entry-wise product) of two vectors $\x$ and $\y$. The rest of this work is organized as follows. Section-\ref{model} describes the DoA signal model used in this work. Section-\ref{main_content} describes the main content, wherein we propose algorithms for joint detection, estimation and grid matching of DoAs from single snapshot measurements. Simulation results for evaluating the performance of the algorithm are presented in section \ref{sim}. We conclude the paper in section \ref{concl} followed by references. Proofs of some of the theorems are provided in the Appendix.
\section{Signal Model}
\label{model}
We consider an array of $M$ elements, impinged by an unknown number ($S$) of sources. The measurements at each element can be expressed as a superposition of $S$ elementary waveforms ($a(\alpha_{i},d), i = 1,2,\ldots,S$), each containing unknown angles $\alpha_i\in[\kappa_1,\kappa_2]$ as,
\begin{equation}
\tilde{b}(d) = \sum_{i=1}^{S} s_{i} a(\alpha_{i},d) + v(d),\nonumber
\end{equation}
where $v(d)$ is a white Gaussian noise process with zero mean and variance $\sigma^{2}$, $s_{i}$ are the weights and $\tilde{b}(d)$ are the measurements over the spatial variable $d=1,2,\ldots,M$. The recovery problem now reduces to detecting the number of sources $S$, estimating their corresponding weights $s_{i}$ and parameters $\alpha_{i}$, which is non-linear \cite{CBP}.

In the grid based signal model for detection and estimation, the interval $[\kappa_1,\kappa_2]$ is discretized into $N$ bins, each of size $r$ to obtain the estimation grid, ${\rho_{1},\ldots,\rho_{N}}$. Let $x_{k}$ denote the weight, corresponding to the source in $k^{th}$ bin. The discrete model approximation for $\tilde{b}(d)$ is then given by \cite{CBP,GLTLS,RJ13},
\begin{equation}
 b(d) = \sum_{k=1}^{N} x_{k} a(\rho_{k},d) + v(d).\nonumber
\end{equation}
The above equation can be expressed in vector form as:
\begin{equation}
b(d) = \mathbf{a}^{T}(d)\mathbf{x} + v(d),\nonumber
\end{equation}
where $\mathbf{x}=[x_{1},x_{2},\ldots,x_{N}]^{T}$ and $\mathbf{a}(d) =[a(\rho_{1},d), \ldots,$ $,a(\rho_{N},d)]^{T}$. Stacking the measurements, we obtain
\begin{equation}
\mathbf{b}_{M\times1} = \mathbf{A}_{M\times N}\mathbf{x}_{N\times1} + \mathbf{v}_{M\times1},
\label{CS_Noise}
\end{equation}
where $\mathbf{b}$ is the measurement vector, $\mathbf{A} = [\mathbf{a}(0), \mathbf{a}(1),\ldots,$ $, \mathbf{a}(M-1)]^{T}$ is the array steering matrix (with $M\leq N$), and $\mathbf{x}$ is the signal of interest which has a sparse or almost sparse representation under the basis of $\mathbf{A}$. 

This discretization of the estimation grid into discreet bins is the cause of grid mismatch \cite{CalBnk}. If the bin size is made too small, then there is also the risk of columns of $\mathbf{A}$ becoming correlated, thus reducing the incoherence of $\mathbf{A}$. Classical grid based estimation methods recover the desired signal without any ambiguities only if the signal is placed exactly on the corresponding grid cells. Any grid mismatch leads to ambiguities in estimation due to the leakage of source power over all the grid cells. The source power leakage depends on the kernel used for the construction of the array steering matrix, $\mathbf{A}$. In the SSR framework, it may also mean that the signal is less or even no longer sparse in the basis domain \cite{CalBnk}. Hence, it is necessary to account for grid mismatch in DoA detection and estimation.
\subsection{Modeling Grid Mismatch}
\label{gmm}
The earliest model proposed for grid matching is the errors in variables (EIV) model, which treats the grid mismatch effect as an additive error matrix, $\mathbf{E}$ as shown below in \eqref{EIV} \cite{GLTLS,TS}, 
\begin{eqnarray}
\mathbf{b} = \hat{\mathbf{A}}\mathbf{x} + \mathbf{v}, & \hat{\mathbf{A}} = \mathbf{A} + \mathbf{E}
\label{EIV}
\end{eqnarray}
The performance of the model in \eqref{EIV} characterized by its Cramer-Rao bound, derived in \cite{bcrbeiv}. However, the model described by \eqref{EIV} does not exploit the inherent Vandermonde structure of the array steering vectors in the DoA signal model, hence making the detection and estimation of DoAs complicated.

A special case of the EIV model, which preserves the Vandermonde structure of DoAs, is obtained by the Taylor series based interpolation model. This model is obtained by an interpolation of the array steering matrix, $\mathbf{A}$ w.r.t the parameters of interest as described below \cite{RJ13}.

We note that the grid mismatch problem occurs if a particular parameter of interest, $\alpha_{i}$ is not present on the estimation grid. Hence, to add $\alpha_{i}$ to the estimation grid, a Taylor series interpolation of $a(\rho,d)$ over the nearest $\rho_{k}$ is given by \cite{RJ13},
\begin{equation}
a(\rho_{k}+ p_{k},d) \approx a(\rho_{k},d) + \frac{\partial a(\rho,d)}{\partial\rho}\Bigl|_{\rho = \rho_{k}}p_{k}.\nonumber
\end{equation}
Here $p_{k}$ gives the perturbation on $\rho_{k}$ to add $\alpha_{i}$ to the grid, and it is assumed real and unknown. It can be seen that the misaligned grid can become an aligned grid if $p_{k}$ can be estimated correctly. Thus, the grid-mismatch problem can be converted into an estimation problem with interpolation over the estimation grid.

Including the Taylor series approximation, the measurements, $b(d)$ can be approximated as,
\begin{equation}
b(d) = \sum_{k=1}^{N} x_{k} a(\rho_{k},d) + \sum_{k=1}^{N} x_{k}p_{k} \frac{\partial a(\rho,d)}{\partial\rho}\Bigl|_{\rho = \rho_{k}} + v(d).\nonumber
\end{equation}
The above equation can be expressed in vector form as,
\begin{equation}
b(d) = \mathbf{a}^{T}(d)\mathbf{x} + \mathbf{a'}^{T}(d)\mathbf{P}\mathbf{x} + v(d),\nonumber
\end{equation}
where $\mathbf{a'}(d) =\Bigl[\frac{\partial a(\rho,d)}{\partial\rho}\Bigl|_{\rho = \rho_{1}},\ldots,\frac{\partial a(\rho,d)}{\partial\rho} \Bigr|_{\rho = \rho_{N}}\Bigr]^{T}$ and $\mathbf{P}= \diag(\mathbf{p})$, where $\mathbf{p}=[p_{1},\ldots,p_{N}]^{T}$ represents the grid mismatch of the estimation grid. 

Stacking the measurements, the above equation can be expressed in the matrix-vector form as,
\begin{equation}
\mathbf{b} = \mathbf{Ax} + \mathbf{A_1}\mathbf{P}\mathbf{x} + \mathbf{v},
\label{interp}
\end{equation}
where $\mathbf{A_1}=[\mathbf{a'}(0),\mathbf{a'}(1),\ldots,\mathbf{a'}(M-1)]^{T}$. So writing $\mathbf{A_1} \mathbf{P} = \mathbf{E}$, the interpolation model for grid mismatch becomes a special case of the EIV model in (\ref{EIV}). The model in \eqref{interp} has been used for deriving the Cramer-Rao bound in \cite{RJ13}, which evaluates the accuracy of the model for grid matching and hence, justifies its use.  

The model in (\ref{interp}) can be equivalently expressed as,
\begin{align}
\mathbf{b} &= \begin{bmatrix}\mathbf{A}|\mathbf{A_1}\end{bmatrix}\begin{bmatrix}\mathbf{x}\\ \mathbf{p} \odot\mathbf{x}\end{bmatrix} + \mathbf{v}, \nonumber\\
&= \boldsymbol{\Phi}\mathbf{y} + \mathbf{v};\hspace{1cm}\mathbf{y}= \begin{bmatrix}\mathbf{x}\\ \mathbf{p}\odot\mathbf{x} \end{bmatrix}.
\label{interp1}
\end{align}
Here, it should be noted that, in the above equation if $x_j=0$, for some $j\in\{1,2,\ldots,N\}$ then $p_j$ has no contribution to $\mathbf{b}$, \emph{i.e.},, by definition $p_j\neq0$ only if $x_j\neq0$. The model in \eqref{interp1} has been used for estimation of the weights $\x$ and grid matching (or estimation of $\p$), with the knowledge of the number sources ($S$) in the measurements \cite{RJ13}. In this work, we also have the additional problem of detecting the number of sources $S$ from the measurements.

We consider another equivalent model by noting that the Vandermonde structure of the array steering vectors gives $\A_{1}\P = \D\A\C$, where $\D =\diag([1,2,\ldots,M])$ and $\C=\diag(\c)$, where the entries of $\c$ depend on the geometry of the DoA problem. For the case of uniform linear array (ULA), $\c = [j2\pi\Delta\cos(\theta_1)p_1, j2\pi\Delta\cos(\theta_2)p_2, \ldots, j2\pi\Delta\cos(\theta_M)p_M]$. From the above discussions we have,
\begin{equation}
\b =\begin{bmatrix}\A|\D\A\end{bmatrix}\begin{bmatrix}\x\\\c\odot\x\end{bmatrix}+\v.
\label{interp2}
\end{equation}
We observe that any source DoA $\alpha_{i}$ can be expressed as the sum $\rho_{i}+p_{i}$, where $\rho_{i}$ lies on the estimation grid and hence can be estimated for any choice of the estimation grid. Now, we select the estimation grid ($\boldsymbol{\rho}$) of $\A$ in such a way that the array steering matrix is constrained to be orthogonal, i.e, $\A^{H}\A = \I$. This choice of the estimation grid makes $\A$ maximally incoherent and hence is the best for SSR and also has computational advantages. We now pre-multiply \eqref{interp2} by $\A^{H}$ to obtain,  
\begin{align}
\A^{H}\b &= \begin{bmatrix}\I|\A^{H}\D\A\end{bmatrix}\y + \A^{H}\v,\nonumber\\
\overline{\b} &= \begin{bmatrix}\I|\G\end{bmatrix}\overline{\y}+\overline{\v},\hspace{1cm}\overline{\y}= \begin{bmatrix}\mathbf{x}\\ \mathbf{c}\odot\mathbf{x} \end{bmatrix},
\label{interp3}
\end{align}
where, $\G = \A^{H}\D\A$, $\overline{\b} = \A^{H}\b$, $\overline{\v}=\A^{H}\v$. The model in \eqref{interp3} will be used for detection (finding $S$), estimation (estimating $\boldsymbol{\rho}$) and grid matching (estimating $\p$) of DoAs. In the case of multiple snapshot measurements, an extension of \eqref{interp3} gives a nice structure which can be used for estimation of DoAs using the techniques described in \cite{bctls}.

Let $\boldsymbol{\alpha}$ be the vector representing $S$ source locations (actual DoAs) and let $\hat{\boldsymbol{\rho}}$ represent the $\hat{S}$ location estimates of the sources. We define the probability of correct detection ($P_c$) as the probability that all the sources and their locations are detected correctly, \emph{i.e.}, $P_c = \pr\{\hat{\boldsymbol{\rho}} = \boldsymbol{\alpha}\}$, similarly the probability of miss ($P_m$) is defined as the probability that one or more sources is not detected, \emph{i.e.}, $P_m = \pr\{\hat{S}<S, \hat{\rho}_i=\alpha_i, i = 1,2, \ldots, \hat{S}\}$ and the probability of false alarm, $P_f = 1-P_c-P_m$. We define the signal to noise ratio, SNR as $\mean\{\NrmTwo{\A\x}^2\}/\mean\{\NrmTwo{\v}^2\}$. 

\emph{Problem Description:} Given the measurements $\b$, the array steering matrix $\A$, SNR and the required probability of correct detection $P_c$. The goal is to propose test statistics to detect the number of sources $\hat{S}$, their corresponding locations $\hat{\rho}_i$ on the estimation grid and estimate the grid mismatch error $p_{i}$ to match the grid. The proposed tests should achieve the required probability of correct detection $P_c$. 
\section{Joint Detection Estimation and Grid Matching for Multiple Targets}\label{main_content}
In this section, we briefly review the lasso estimator, the lasso path and propose tests for joint detection, estimation and grid matching of DoAs from single snapshot measurements. 

\emph{The Lasso Estimator}: The lasso estimator for the DoA model in \eqref{CS_Noise} is given by the solution of the following optimization problem.
\begin{equation}
\xhat(\tau) = \argmin_{\x} \frac{1}{2}\NrmTwo{\b-\A\x}^2 + \tau\NrmOne{\x},
\label{lasso}
\end{equation}
where $\xhat(\tau)$ is the estimate of $\x$ and $\tau\in[0,\infty)$ is the regularization parameter which controls the sparsity of $\xhat$. Applying KKT conditions to \eqref{lasso}, the lasso solution can be characterized as follows,
\begin{thm}
For a certain value of $\tau$, the solution to \eqref{lasso} is characterized by
\begin{align}
&\a_{i}^{H}(\b-\A\xhat) = \tau\frac{\xhats_{i}}{|\xhats_{i}|}&\forall\xhats_{i}\neq 0&,\label{sol0}\\
&|\a_{i}^{H}(\b-\A\xhat)|<\tau&\forall\xhats_{i}=0,&
\end{align}
where $\hat{x}_j$, $j = 1,2,\ldots,M$ is the $j^{th}$ entry of $\xhat$ and $\a_j$ is the $j^{th}$ column of $\A$. The singular points (knot points) occur when the second condition is changed to $\tau = \max\limits_{\{i|\xhats_{i}=0\}}|\a_{i}^{H}(\b-\A\xhat)|$.
\end{thm}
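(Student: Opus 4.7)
The plan is to recognize that the objective in \eqref{lasso} is a convex function of $\x$ (a sum of a convex quadratic and a positively scaled norm), so global optimality is equivalent to $\mathbf{0}$ lying in the subdifferential of the objective at $\xhat$. I would therefore decompose the objective into its smooth quadratic part, $F(\x) = \tfrac{1}{2}\NrmTwo{\b-\A\x}^2$, and its non-smooth part, $\tau\NrmOne{\x}$, and handle them separately.

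For the smooth part, I would compute $\nabla_{\x} F(\x) = -\A^{H}(\b-\A\x)$, so the $i$th component of the gradient evaluated at $\xhat$ is $-\a_{i}^{H}(\b-\A\xhat)$. For the non-smooth part, I would describe the subdifferential of $\tau|x_i|$ component-wise: at $x_i \neq 0$ it is the singleton $\{\tau\, x_i/|x_i|\}$ (the complex sign of $x_i$ scaled by $\tau$), and at $x_i = 0$ it is the closed disk $\{g : |g| \le \tau\}$. Combining the two via the subgradient sum rule, the condition $\mathbf{0}\in\partial(F+\tau\NrmOne{\cdot})(\xhat)$ reduces coordinate-wise to $\a_{i}^{H}(\b-\A\xhat)=\tau\,\xhats_i/|\xhats_i|$ on the support and $|\a_{i}^{H}(\b-\A\xhat)|\le\tau$ off the support, which is exactly \eqref{sol0} and the inequality condition following it.

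To justify the characterization of knot points, I would argue that as $\tau$ varies, the active set $\{i:\xhats_i\neq 0\}$ can change only when some coordinate crosses between the interior and the boundary of the subdifferential disk. A knot point is precisely a value of $\tau$ at which some currently inactive index $i$ attains equality in the inequality condition, $|\a_{i}^{H}(\b-\A\xhat)|=\tau$, and is about to enter the support (or symmetrically, an active index shrinks to zero). Taking the maximum over inactive indices captures the first such $\tau$ at which the active set changes, which yields the stated characterization.

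The only delicate step is the subgradient calculus for the complex $\ell_1$ norm: because $|x_i|$ is not holomorphic, I would either invoke Wirtinger calculus or, equivalently, split each complex $x_i$ into real and imaginary parts and treat $|x_i|=\sqrt{(\Re x_i)^2+(\Im x_i)^2}$ as a real convex function on $\re^2$, whose subdifferential at the origin is the unit disk and elsewhere is the unit outward radial direction $x_i/|x_i|$. Once this is established, the rest of the argument is routine convex analysis and I expect no further obstacle.
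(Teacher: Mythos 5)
Your proposal is correct and is the standard subdifferential/KKT derivation of the complex lasso optimality conditions; the paper itself offers no proof here, deferring entirely to \cite{complexlars}, whose Theorem~1 is established by essentially the same argument you give (convexity, the gradient $-\a_i^{H}(\b-\A\x)$ of the quadratic term, and the subdifferential of the complex modulus being the radius-$\tau$ disk at the origin and the scaled phase $\tau\, x_i/|x_i|$ elsewhere). Your handling of the one delicate point --- treating $|x_i|$ as a real convex function on $\re^2$ rather than attempting holomorphic calculus --- is exactly right, and your reading of the knot condition as the first inactive index attaining equality $|\a_i^{H}(\b-\A\xhat)|=\tau$ correctly reconciles the strict inequality in the generic case with the $\leq$ produced by the subgradient condition.
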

\begin{proof}
See \cite[Theorem 1]{complexlars}.
\end{proof}
We observe that the lasso solution for the special case of orthogonal array steering matrix ($\A^{H}\A = \I$) reduces to the following thresholding estimator, 
\begin{equation}
\hat{x}_j(\tau) =
\begin{cases}
\a_{j}^{H}\b-\tau\frac{\hat{x}_j}{|\hat{x}_j|} & \quad \text{if}\hspace{2mm}|\a_{j}^{H}\b|>\tau,\\
0 & \quad \text{if}\hspace{2mm}|\a_{j}^{H}\b|\leq\tau.
\end{cases}
\label{lassorth}
\end{equation}
We now discuss the behavior of $\xhat$ for variations in $\tau$, which is called the lasso path. The lasso path can be obtained using the iterative algorithm described in \cite{complexlars}.

\emph{Lasso Path}: The lasso estimator $\xhat(\tau)$ is a continuous and piecewise linear function of $\tau$. The points $\tau_k$ with $\tau_1\geq\ldots\geq\tau_k\ldots\geq\tau_r$, where the slope of the function $\xhat(\tau)$ changes are called knots (or singular points) \cite{complexlars}. For all $\tau\geq\|\A^{H}\b\|_{\infty}$, the lasso estimate $\xhat(\tau) = \mathbf{0}$. For decreasing $\tau$, each knot $\tau_k$ marks the entry or removal of some variable from the current active set ($J$), which is the index set corresponding to non-zero entries of $\hat{\x}(\tau_{k-1})$. Hence, the active set remains constant in between the knots. For a matrix $\A$ satisfying the special positive cone condition (example orthogonal matrices), no variables are removed from the active set as $\tau$ decreases and hence there are always $M$ knots in the lasso path. 

We observe that the sparsity changes only at the knots. The estimation algorithm of \cite{complexlars} sequentially iterates over the knot points, $\tau_k, k=1,2,\ldots,r$ and calculates $\xhat(\tau_k)$. So, we propose tests at the knot points to obtain a stopping condition for the iterative algorithm as the lasso solution varies from $\xhat(\tau_1)$ to $\xhat(\tau_S)$. Once, the tests detect the number of sources $\hat{S}$ or equivalently $\tau_{\hat{S}}$, the DoAs can then be estimated by solving lasso at $\tau =\tau_{\hat{S}}$. 
\subsection{Orthogonal Models}
\label{orthogonal}
Here we assume that the array steering matrix is orthogonal ($\A^{H}\A = \I$) and the sources lie on the estimation grid (perfect grid matching). These assumptions make the analysis of the test statistics simpler for evaluating thresholds. Specifically, the components of the lasso estimate, $\xhat$ in \eqref{lassorth} are independent. Although, this scenario is not practical as it occurs only for antennas with infinite apertures, the insights obtained here are helpful in proposing tests while working with non-orthogonal (over-sampled) and grid matching models. In the following, we propose the covariance test, test-A, test-B and test-C. The first three tests also require the additional knowledge of noise variance.

\emph{Covariance Test:} The covariance test statistics is defined at the knots of the lasso path. At the $k^{th}$ knot, the covariance test statistics is defined as \cite{siglass},
\begin{equation}
T_k = \frac{1}{\sigma^2}\Big(\b^{H}\A\xhat(\tau_{k+1})-\b^{H}\A_{J}\tilde{\x}_{J}(\tau_{k+1})\Big),
\label{covtest}
\end{equation}
where $J$ is the active set just before $\tau_k$, $\tilde{\x}(\tau_{k+1})$ is the solution of the lasso problem using only the active model $\A_{J}$ (columns of $\A$ belonging to $J$), with $\tau = \tau_{k+1}$, \emph{i.e.},
\begin{equation}
\tilde{\x}_{J}(\tau_{k+1}) = \argmin_{\x\in\Re^{|J|}}\frac{1}{2}\NrmTwo{\b-\A_{J}\x_{J}}^2 + \tau_{k+1}\NrmOne{\x_{J}}.
\end{equation}
Intuitively, the covariance test statistics defined in (\ref{covtest}) is a function of the difference between $\A\xhat$ and $\A_{J}\tilde{\x}_{J}$, which represents the fitted values of the model by including and leaving out the next $\hat{x}_j$ (corresponding to the knot at $\tau_{k+1}$), respectively. For the case of orthogonal $\A$, it can be shown \cite[Lemma 1]{siglass} that the covariance test statistics reduces to
\begin{equation}
T_k = \tau_k(\tau_{k}-\tau_{k+1})/\sigma^2,\hspace{1mm}k = 1,2,\ldots,M-1,
\end{equation}
where, the $M$ knots of the lasso estimator $\xhat(\tau)$ are given by $[\mathcal{I}, \boldsymbol{\tau}] =$sort($|\A^{H}\b|$). The function sort($\u$) sorts the entries of $\u$ in the descending order, $\mathcal{I}$ is the collection of the corresponding indices of $|\A^{H}\b|$ and $\boldsymbol{\tau}$ is the vector of $M$ knot points.

Now, let the number of non zero entries in the actual parameter $\x$ be $S$. We define $B$ as the event that the $S$ sources are added to the estimate $\xhat$ at the first $S$ knot points of the lasso path:
\begin{equation}
B = \Big\{\min_{j\in \tilde{T}}\tau_{j}>\max_{j\notin \tilde{T}}\tau_{j}\Big\}.
\end{equation}
For the case of orthogonal models, event $B$ reduces to
\begin{equation}
B = \Big\{\min_{j\in \tilde{T}}|\a_{j}^{H}\b|>\max_{j\notin \tilde{T}}|\a_{j}^{H}\b|\Big\},
\end{equation}
where $\tilde{T}$ is the support of the original parameter $\x$ (columns of $\A$ corresponding to non-zero entries of $\x$).

\emph{Remark-1}: Event $B$ is defined to ensure that $S$ active parameters ($S$ sources) are added to the estimate $\xhat$ in the first $S$ knots, then the test statistics at $(S+1)^{th}$ knot and beyond would depend only on the truly inactive variables (noise). The detection tests proposed below are conditioned on event $B$. Hence, $P(B)=1$ is a sufficient condition for the detection tests to provide rate control ($\hat{P}_c = P_c$). However, we show in Lemma \ref{lem1} that $P(B)\to1$, whenever the power of the weakest source is large compared to the noise power or whenever the detection is performed in the moderate to high SNR regime \cite[Theorem-1]{siglass}. Hence, detection at moderate to high SNR is a sufficient condition for $P(B)\to1$ and hence is also a sufficient condition for the tests to provide rate control for a given $P_c$.
\begin{lem}\label{lem1}
For orthogonal models, $P(B)\to1$ at moderate to high SNRs.
\end{lem}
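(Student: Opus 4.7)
The plan is to reduce event $B$ to a direct comparison between the minimum active signal entry and the maximum noise entry, and then invoke Gaussian tail bounds.

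First I would exploit orthogonality. Since $\A^{H}\A = \I$ and $\b = \A\x + \v$, we have $\A^{H}\b = \x + \w$ with $\w := \A^{H}\v$. Because $\v$ is zero-mean Gaussian with covariance $\sigma^{2}\I$ and $\A$ is orthogonal, $\w$ is zero-mean Gaussian with covariance $\sigma^{2}\I$ as well. Consequently, for $j \in \tilde T$ we have $\a_{j}^{H}\b = x_{j} + w_{j}$, and for $j \notin \tilde T$ we have $\a_{j}^{H}\b = w_{j}$. The event $B$ thus becomes
\begin{equation}
B = \Big\{\min_{j\in\tilde T}|x_{j}+w_{j}| > \max_{j\notin\tilde T}|w_{j}|\Big\}.\nonumber
\end{equation}

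Second, I would lower bound the left-hand side via the reverse triangle inequality, $|x_{j}+w_{j}| \ge |x_{j}| - |w_{j}| \ge x_{\min} - \|\w\|_{\infty}$, where $x_{\min} := \min_{j\in\tilde T}|x_{j}|$, while the right-hand side is at most $\|\w\|_{\infty}$. A sufficient condition for $B$ is therefore $x_{\min} > 2\|\w\|_{\infty}$, which gives
\begin{equation}
\pr(B^{c}) \le \pr\!\left(\|\w\|_{\infty} \ge x_{\min}/2\right).\nonumber
\end{equation}

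Third, I would apply a Gaussian union bound. Since each $w_{j}$ is Gaussian with variance $\sigma^{2}$, the standard tail estimate together with a union bound over the $N$ coordinates yields a bound of the form
\begin{equation}
\pr\!\left(\|\w\|_{\infty} \ge x_{\min}/2\right) \le 2N\exp\!\left(-\frac{x_{\min}^{2}}{8\sigma^{2}}\right).\nonumber
\end{equation}
In the moderate-to-high SNR regime, as formalized in Remark 1 (the weakest source power is large compared to the noise power), the ratio $x_{\min}^{2}/\sigma^{2}$ grows much faster than $\log N$, so the right-hand side tends to $0$ and $P(B)\to 1$.

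The only delicate point is pinning down the SNR notion: the paper's SNR is defined as $\mean\{\|\A\x\|^{2}\}/\mean\{\|\v\|^{2}\} = \|\x\|^{2}/(M\sigma^{2})$, which does not immediately force $x_{\min}/\sigma$ to be large in the presence of a very weak source among strong ones. The cleanest fix is to interpret ``moderate to high SNR'' as the weakest-source SNR being large, consistent with Remark 1 and with \cite[Theorem 1]{siglass}; everything else is a routine Gaussian concentration argument.
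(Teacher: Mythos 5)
Your proof is correct and rests on the same decomposition as the paper's: both arguments reduce $B$ to the comparison $\min_{j\in\tilde{T}}|x_j+w_j| > \max_{j\notin\tilde{T}}|w_j|$ and then insert a deterministic threshold separating the signal side from the noise side. Where you differ is in the probabilistic machinery. The paper works with the exact distributions of the two sides: the signal knots $|x_j+w_j|$ are Rician, so $\pr(\tau_k\geq\epsilon)$ is a Marcum $Q$ function $\mathcal{Q}_1(\theta/\sigma,\epsilon/\sigma)$ tending to $1$, while the noise knots are i.i.d.\ Rayleigh, giving $\pr\big(\max_{k\notin\tilde{T}}\tau_k\leq\epsilon\big)=\big(1-e^{-\epsilon^2/2\sigma^2}\big)^{M-S}$, with the intermediate threshold chosen so that $\sigma\ll\epsilon\ll\theta$. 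You instead take $\epsilon=x_{\min}/2$, use the reverse triangle inequality, and apply a single union bound on $\|\w\|_{\infty}$; this is more elementary (no Rician/Marcum-$Q$ computation) and has the added benefit of producing an explicit non-asymptotic bound $\pr(B^c)\leq cN\exp\big(-x_{\min}^2/(8\sigma^2)\big)$ rather than only a limit statement. One small point: the measurements are complex, so $|w_j|$ is Rayleigh rather than the absolute value of a real Gaussian; under the paper's normalization the tail is $\pr(|w_j|\geq t)=e^{-t^2/(2\sigma^2)}$, which happens to reproduce your exponent $x_{\min}^2/(8\sigma^2)$ at $t=x_{\min}/2$, so only the leading constant changes and the conclusion is unaffected. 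Your closing remark about the SNR notion is also consistent with the paper, whose proof explicitly assumes $\theta=\min_{j\in\tilde{T}}x_j\gg\sigma$ (a per-source condition) rather than largeness of the aggregate SNR.
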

\begin{proof}
See Appendix-\ref{A0}
\end{proof}
From the above discussions, we conclude that it suffices to stop at the $(S+1)^{th}$ knot for providing rate control at moderate to high SNR regime. This requires the evaluation of c.d.f of $T_{\mathsmaller{S+1}}$ conditional on event $B$, given by
\begin{thm}
\label{Finitesamplecov}
The c.d.f of $T_{\mathsmaller{S+1}}$, conditional on event $B$ is,
\begin{equation*}
F_{T_{\mathsmaller{S+1}}}(\eta) = 1-n\INT{\sqrt{\eta}}{\infty}{ye^{(-y^2/2)}\left(1-e^{\frac{-(y-\eta/y)^2}{2}}\right)^{n-1}\mathrm{d}y},
\end{equation*}
where $n = M-S$.
\end{thm}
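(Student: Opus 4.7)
The plan is to reduce the statement to an order-statistic computation. First, since $\A^{H}\A=\I$, the vector $\A^{H}\b=\x+\A^{H}\v$ has, for the inactive indices $j\notin\tilde{T}$, iid components $w_{j}=\a_{j}^{H}\v$. Conditional on event $B$, the first $S$ knots $\tau_{1},\dots,\tau_{S}$ pick out the active indices, so the remaining $n=M-S$ knots $\tau_{S+1},\dots,\tau_{M}$ are precisely the (decreasing) order statistics of $\{|w_{j}|\}_{j\notin\tilde{T}}$. Setting $Y_{j}=|w_{j}|/\sigma$, under the paper's noise normalisation each $Y_{j}$ is standard Rayleigh, with density $f(y)=ye^{-y^{2}/2}$ and cdf $F(y)=1-e^{-y^{2}/2}$ on $[0,\infty)$. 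In particular,
\[
T_{S+1}=\frac{\tau_{S+1}(\tau_{S+1}-\tau_{S+2})}{\sigma^{2}}=Y_{(1)}(Y_{(1)}-Y_{(2)}),
\]
where $Y_{(1)}\geq Y_{(2)}$ are the top two order statistics of $n$ iid copies of $Y$.

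Next, I would invoke the standard joint density
\[
f_{Y_{(1)},Y_{(2)}}(y_{1},y_{2})=n(n-1)f(y_{1})f(y_{2})F(y_{2})^{n-2},\qquad 0\leq y_{2}\leq y_{1},
\]
and integrate over the region $\{y_{1}(y_{1}-y_{2})\leq\eta\}$, i.e.\ $y_{2}\geq y_{1}-\eta/y_{1}$. A case split on $y_{1}\leq\sqrt{\eta}$ (where the constraint is automatic because $y_{1}(y_{1}-y_{2})\leq y_{1}^{2}\leq\eta$) versus $y_{1}>\sqrt{\eta}$ (where the constraint is binding and $y_{1}-\eta/y_{1}>0$), combined with the inner antiderivative $F(y_{2})^{n-1}/(n-1)$ of $f(y_{2})F(y_{2})^{n-2}$, yields contributions $F(\sqrt{\eta})^{n}$ and $\int_{\sqrt{\eta}}^{\infty} n\,f(y_{1})\bigl[F(y_{1})^{n-1}-F(y_{1}-\eta/y_{1})^{n-1}\bigr]dy_{1}$. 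The $F(y_{1})^{n-1}$ piece of the latter combines with $F(\sqrt{\eta})^{n}$ to give unity, leaving
\[
F_{T_{S+1}}(\eta)=1-n\int_{\sqrt{\eta}}^{\infty}f(y)\,F(y-\eta/y)^{n-1}\,dy,
\]
which is the stated expression after substituting the Rayleigh forms for $f$ and $F$.

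The main obstacle, I expect, is the case split at $y_{1}=\sqrt{\eta}$ and keeping the $n$ and $n-1$ factors straight; a secondary subtlety is verifying that $|w_{j}|/\sigma$ is standard Rayleigh under the implicit complex-noise convention (so that the scale-one form with $f(y)=ye^{-y^{2}/2}$ applies directly). As a sanity check, $F_{T_{S+1}}(0)=1-n\int_{0}^{\infty}ye^{-y^{2}/2}(1-e^{-y^{2}/2})^{n-1}dy=1-n\cdot(1/n)=0$ via the substitution $u=F(y)$, which is consistent with $T_{S+1}\geq 0$, and $F_{T_{S+1}}(\eta)\to 1$ as $\eta\to\infty$ since the integral then vanishes.
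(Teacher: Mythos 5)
Your proposal is correct and follows essentially the same route as the paper's proof: both reduce $T_{S+1}$ to $Y_{(1)}(Y_{(1)}-Y_{(2)})$ for the top two order statistics of $n=M-S$ iid Rayleigh variables, use the joint density $n(n-1)f(y_1)f(y_2)F(y_2)^{n-2}$ of consecutive order statistics, and perform the same domain split at $y_1=\sqrt{\eta}$ (the paper phrases it as splitting the double integral for the c.d.f.\ after first changing variables to $(X_n, T_{S+1})$, which is only a cosmetic reorganization of your direct integration over $\{y_1(y_1-y_2)\le\eta\}$). The normalization caveat you flag for the complex-noise Rayleigh scale is likewise left implicit in the paper.
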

\begin{proof}
See Appendix-\ref{A1}
\end{proof}
Now, with the knowledge of the c.d.f of $T_{S+1}$ conditional on event $B$, the problem of finding the number of sources $S$ reduces to the following hypothesis testing problem.
\begin{center}
$H_o = T_k$ is distributed as $F_{T_{\mathsmaller{S+1}}}$.\\
$H_a = T_k$  is not distributed as $F_{T_{\mathsmaller{S+1}}}$.\\
\end{center}
The idea is to evaluate the test statistics at each knot in the increasing order (from $\tau_M$ to $\tau_1$) and compare the value to a threshold, $\eta$. The first instance, where $T_k>\eta$ is the stopping point, because conditional on $B$, the stopping point corresponds to the knot $\tau_S$, where all the sources have been added to the lasso estimate $\xhat$. The threshold, $\eta$ is obtained from the tail probability of the c.d.f of $T_{S+1}$  by fixing the required probability of correct detection, $P_c$
\begin{equation}
P_c = \pr\{T_{k}\leq\eta\} = F_{T_{S+1}}(\eta).
\end{equation}
We observe that the c.d.f of the covariance test, though an exact (non-asymptotic) distribution, requires numerical integration for evaluating the threshold at each knot, hence making the test complicated. In \cite{siglass}, the asymptotic c.d.f of $T_{k}, k>S$, conditional on event $B$ is derived for real measurement model. The extension to complex measurement model is given by the following theorem,
\begin{thm}\label{asympcov}
Let the magnitude of the smallest nonzero entry of $\x$ be large compared to $\sigma$. Then event $B$ is satisfied, \emph{i.e.}, $\pr(B)\rightarrow 1$ and furthermore, for each fixed $l\geq0$
\begin{equation*}
[T_{\!\mathsmaller{S+1}},T_{\!\mathsmaller{S+2}},\ldots,
T_{\!\mathsmaller{S+l}}]\xrightarrow{d}\left[\mathrm{Exp}(1),\mathrm{Exp}(\frac{1}{2}),\ldots,
\mathrm{Exp}(\frac{1}{l})\right],
\end{equation*}
conditional on $B$, \emph{i.e.}, the $l^{th}$ statistics after $S$ converges independently to exponential distribution with mean $1/l$.
\end{thm}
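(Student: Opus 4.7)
The plan is to condition on event $B$ and reduce the problem to the joint distribution of the top spacings among the order statistics of $n=M-S$ i.i.d.\ complex-Gaussian magnitudes; the exponential limit will then follow from the R\'enyi representation for exponential order statistics combined with a Taylor expansion that converts gaps of square roots into gaps of squares.

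By the signal-strength hypothesis and Lemma~\ref{lem1}, $\pr(B)\to 1$, so it suffices to prove the distributional statement conditional on $B$. On this event the first $S$ knots are consumed by the active set, hence $\tau_{S+1}>\tau_{S+2}>\cdots>\tau_M$ are the descending order statistics of $\{|\a_j^H\b|:j\notin\tilde{T}\}$. Since $\A^H\A=\I$ and $\v$ is white complex Gaussian, the entries $\a_j^H\v$ for $j\notin\tilde{T}$ are i.i.d.\ complex Gaussian, and $|\a_j^H\b|^2/\sigma^2$ are i.i.d.\ exponential (with rate fixed by the real/imaginary variance-split convention, already implicit in the density $y\,e^{-y^2/2}$ appearing in Theorem~\ref{Finitesamplecov}).

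Set $V_l=\tau_{S+l}^2/\sigma^2$; these are, up to the fixed scale from the previous step, the top-$l$ descending order statistics of $n$ i.i.d.\ exponentials. The R\'enyi representation then gives, exactly, that the successive spacings $V_l-V_{l+1}$ are mutually independent and equal in law to $\xi_l/l$ with $\xi_l$ i.i.d.\ $\mathrm{Exp}(1)$, producing both the rate pattern $1,1/2,\ldots,1/l$ and the independence claimed in the theorem. To connect these squared-gap exponentials back to the statistic itself, I factor
\[
T_{S+l}=\tau_{S+l}(\tau_{S+l}-\tau_{S+l+1})/\sigma^2=\frac{\tau_{S+l}}{\tau_{S+l}+\tau_{S+l+1}}\bigl(V_l-V_{l+1}\bigr).
\]
As $n\to\infty$ the top order statistics grow like $\ln(n/l)$ in probability while the spacings remain $O_p(1)$, so $\tau_{S+l+1}/\tau_{S+l}=\sqrt{V_{l+1}/V_l}\to 1$ in probability and the prefactor tends to $1/2$. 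A Slutsky argument applied jointly on the vector of $l\leq l_0$ coordinates then converts the squared-gap limit into the desired joint limit $(T_{S+1},\ldots,T_{S+l_0})\xrightarrow{d}(\mathrm{Exp}(1),\mathrm{Exp}(1/2),\ldots,\mathrm{Exp}(1/l_0))$ with mutually independent components.

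The main obstacle I anticipate is making the Slutsky step uniform in $l$: one must verify that $\tau_{S+l+1}/\tau_{S+l}\to 1$ simultaneously for $l\in\{1,\ldots,l_0\}$ so that the joint independence of the R\'enyi spacings is preserved in the limit, and that this convergence holds conditional on $B$ rather than only unconditionally. Beyond that, the argument is a direct complex-case adaptation of the orthogonal-design covariance-test analysis in the cited reference.
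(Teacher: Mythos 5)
Your route is genuinely different from the paper's and the architecture is sound. The paper works on the Rayleigh scale throughout: it invokes the Von Mises condition to get extreme-value normalizing constants $a_M=b_M=\sqrt{2\log M}$, cites an external joint-convergence result for the renormalized top order statistic and its spacings $(W_0,W_1,\ldots,W_l)$, and then expands $T_{\mathsmaller{S+k}}=(a_M/b_M)W_k+O(1/\log M)\cdot W_k\to W_k\xrightarrow{d}\mathrm{Exp}(1/k)$. You instead square the knots so that the R\'enyi representation makes the top spacings \emph{exactly} independent exponentials for every finite $n=M-S$, and push all of the asymptotics into the scalar prefactor: the identity $T_{\mathsmaller{S+l}}=\frac{\tau_{S+l}}{\tau_{S+l}+\tau_{S+l+1}}(V_l-V_{l+1})$ is algebraically correct, the prefactor tends to $1/2$ because $\tau_{S+l+1}/\tau_{S+l}\to1$ (the top knots grow like $\sqrt{2\log M}$ while the gaps are $O_p(1/\sqrt{\log M})$), and the joint Slutsky step is unproblematic since the prefactors converge to a deterministic constant, so the exact independence of the spacings survives in the limit. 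This buys a more self-contained argument: the only asymptotic ingredient is the ratio of consecutive knots, and no extreme-value normalization or external lemma is needed. The conditioning-on-$B$ point you flag is treated identically in both proofs (since $\pr(B)\to1$, the conditional law is asymptotically the unconditional one), so it is not a real obstacle.

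One constant does need repair: as written, your two steps double-count a factor of $2$. Under the paper's convention ($\tau_j/\sigma$ Rayleigh with c.d.f.\ $1-e^{-x^2/2}$, as used in the proof of Theorem~\ref{Finitesamplecov}), the variables $V_l=\tau_{S+l}^2/\sigma^2$ are exponential with mean $2$, so the R\'enyi spacings $V_l-V_{l+1}$ are exponential with mean $2/l$, not $\xi_l/l$; the prefactor $1/2$ then delivers $T_{\mathsmaller{S+l}}\xrightarrow{d}\mathrm{Exp}$ with mean $1/l$ as claimed. If the spacings already had mean $1/l$ as you assert, multiplying by the limiting prefactor $1/2$ would leave mean $1/(2l)$ and contradict the theorem. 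The conclusion is therefore correct, but only once the normalization of $|\a_j^{H}\v|^2/\sigma^2$ is tracked consistently through both halves of the argument rather than asserted separately in each.
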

\begin{proof}
See Appendix-\ref{A2}
\end{proof}
We observe that although the asymptotic distribution of $T_{S+1}$ is tractable, it converges very slowly ($2\log M$), hence offering lesser control in-terms of $P_c$. So we now propose other tests which are both easy to evaluate and exact.

\emph{Test-A:} We note that, if event $B$ is satisfied and there are $S$ sources, then $A_k=\frac{\tau_{S+k}}{\sigma}, k=1\ldots,M-S$ are the order statistics of Rayleigh random variables. We define the Rayleigh test statistics as
\begin{equation}
A_{k} = \frac{\tau_{k+S}}{\sigma}.
\label{raytest}
\end{equation}
We note that $\tau_{S+1}$ is the first knot point corresponding to noise, conditional on event $B$. Hence, $P_c$ can be controlled by accurately detecting $A_1$. The threshold for controlling $P_c$ requires the c.d.f of $A_{1}$ which is given by,
\begin{thm}
The c.d.f of $A_1$ conditional on event $B$ is, 
\begin{equation}
F_{A_{1}}(x)= (1-\exp(-x^2/2))^{M-S}.
\label{orderstat}
\end{equation}
\end{thm}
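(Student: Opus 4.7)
The plan is to exploit the orthogonality $\A^{H}\A = \I$ to decouple the coordinates of $\A^{H}\b$ so that the knots become order statistics of independent Rayleigh magnitudes. First I would write $\A^{H}\b = \x + \w$ with $\w = \A^{H}\v$; since $\A$ has orthonormal columns and $\v$ is white complex Gaussian with variance $\sigma^{2}$, the entries $w_j$ are iid complex Gaussian with variance $\sigma^{2}$, and $|w_j|/\sigma$ is a standard Rayleigh variable with c.d.f.\ $1-\exp(-x^{2}/2)$.

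Next I would combine this with the earlier description of the lasso path for orthogonal $\A$, where the $M$ knot values are precisely the sorted magnitudes $\{|\a_{j}^{H}\b|\}_{j=1}^{M}$. On event $B$ the top $S$ knots are drawn from the active set while the remaining $M-S$ knots are drawn from the inactive set, and for $j \notin \tilde{T}$ the signal component vanishes so $\a_{j}^{H}\b = w_j$. Consequently, conditional on $B$,
\begin{equation*}
\tau_{S+1} \;=\; \max_{j \notin \tilde{T}} |w_j|, \qquad A_{1} \;=\; \max_{j \notin \tilde{T}} \frac{|w_j|}{\sigma}.
\end{equation*}

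Since the active-set variables $\{\a_j^{H}\b\}_{j \in \tilde{T}}$ and the inactive-set noises $\{w_j\}_{j \notin \tilde{T}}$ are mutually independent by the orthogonality of $\A$, and since Lemma~\ref{lem1} guarantees $\pr(B)\to 1$ in the moderate-to-high SNR regime targeted by this work, the conditioning on $B$ leaves the joint law of $\{|w_j|/\sigma\}_{j \notin \tilde{T}}$ unchanged in that limit. I would close the argument by computing the c.d.f.\ of the maximum of $M-S$ iid standard Rayleigh variables, which factorises into
\begin{equation*}
F_{A_{1}}(x) \;=\; \prod_{j \notin \tilde{T}} \pr\!\left\{\frac{|w_j|}{\sigma}\leq x\right\} \;=\; \bigl(1-\exp(-x^{2}/2)\bigr)^{M-S},
\end{equation*}
yielding the claimed expression.

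The main obstacle is handling the conditioning on $B$ rigorously: strictly speaking $\pr(Y \leq y \mid X > Y)$ differs from $\pr(Y \leq y)$ for any finite SNR, so the factorisation above is not exact. The cleanest resolution is to appeal to Lemma~\ref{lem1} so that $\pr(B)\to 1$ makes the conditioning asymptotically vacuous; a finite-sample statement would instead require sandwiching the conditional c.d.f.\ between the unconditional Rayleigh-max law and a correction whose mass is controlled by $1-\pr(B)$, which tends to zero in the SNR regime of interest.
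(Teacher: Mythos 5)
Your proposal is correct and follows essentially the same route as the paper, whose proof is a one-line assertion that $A_1$ is the maximum of the i.i.d.\ Rayleigh variables with c.d.f.\ $(1-\exp(-x^2/2))^{M-S}$; you simply fill in the details (orthogonality decoupling $\A^{H}\b$ into independent coordinates, the inactive coordinates reducing to pure noise on event $B$). Your closing caveat about the conditioning on $B$ not being exactly vacuous at finite SNR is a fair and more careful observation than the paper makes, but it does not change the approach: both arguments treat the conditional law as the unconditional Rayleigh-max law in the moderate-to-high SNR regime where $\pr(B)\to 1$.
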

\begin{proof}
$A_{1}$ is the maximum of the i.i.d Rayleigh random variables and hence its c.d.f is obtained by \eqref{orderstat}.
\end{proof}
The problem of finding $S$ sources reduces to comparing $A_{k}$ with a threshold ($\eta$) at each knot point. The threshold is obtained from the c.d.f \eqref{orderstat} by fixing $F_{A_{1}}$ to the required $P_c$. 

\emph{Test-B:} Let us consider the random variables $E_{i} = A^{2}_{i}, i = 1, 2, \ldots, n$. Then $E_i$ are the order statistics of the standard exponential distribution, conditional on event $B$. Now, we define the Exponential test statistics $B_{n}=E_{n}-E_{n-1}$. The c.d.f of $B_{S+1}$, conditional on event $B$ is required for detection of $S$ sources, which is given by,
\begin{thm}
The c.d.f of $B_{\mathsmaller{S+1}}$ conditional on event $B$ is,
\begin{equation}
F_{B_{\mathsmaller{S+1}}}(x) = 1-\exp(-x).
\label{gammapdf}
\end{equation}
\end{thm}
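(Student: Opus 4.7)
The plan is to reduce the claim to the classical distribution of spacings between consecutive exponential order statistics, via R\'enyi's (Sukhatme's) representation. Conditional on event $B$, the first $S$ knots $\tau_1, \ldots, \tau_S$ of the lasso path correspond exactly to the true signal support, and the remaining $n := M-S$ knots $\tau_{S+1}, \ldots, \tau_M$ are determined purely by the noise projections $\a_j^H \v$ onto inactive columns $j \notin \tilde{T}$. Since $\A^H \A = \I$ and $\v$ is complex Gaussian, these inactive-column projections form an i.i.d.\ collection of complex Gaussians; after normalization by $\sigma$, the squared magnitudes are therefore i.i.d.\ $\mathrm{Exp}(1)$. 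Consequently $E_1, \ldots, E_n$ are, conditional on $B$, the order statistics of $n$ i.i.d.\ standard exponentials---a fact already made explicit in the setup to Test-B.

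Next, I would invoke R\'enyi's representation: if $Y_{(1)} \leq \cdots \leq Y_{(n)}$ are the ascending order statistics of $n$ i.i.d.\ $\mathrm{Exp}(1)$ variates, the consecutive spacings $W_k := Y_{(k)} - Y_{(k-1)}$ are mutually independent with $W_k \sim \mathrm{Exp}(n-k+1)$. In particular, the top spacing $W_n = Y_{(n)} - Y_{(n-1)}$ has c.d.f.\ $1 - e^{-x}$, independently of the sample size $n$.

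The final step is to identify $B_{S+1}$ with this top spacing. Because the paper indexes knots in decreasing order of $\tau$, $E_1 = A_1^2$ is the largest of the noise-driven order statistics and $E_2$ the second largest, so $B_{S+1} = E_1 - E_2$ coincides with $Y_{(n)} - Y_{(n-1)}$ up to a reversal of labels. This yields $F_{B_{S+1}}(x) = 1 - e^{-x}$, as claimed.

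The main obstacle is essentially bookkeeping rather than deep analysis: making sure the inactive-column noise projections, after the $\sigma$-normalization, really are i.i.d.\ $\mathrm{Exp}(1)$ under the paper's complex-Gaussian convention (so that the exponential rate matches the target $1$ and is consistent with the Rayleigh c.d.f.\ used in Test-A), and aligning the decreasing lasso-path indexing with the ascending convention traditionally adopted when stating R\'enyi's representation. Once these identifications are in place, the theorem follows in one line and, as a bonus, the stronger fact that successive $B_k$'s are mutually independent drops out of the same representation.
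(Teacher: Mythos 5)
Your argument is correct and reaches the same conclusion as the paper, but by a different route: you invoke R\'enyi's (Sukhatme's) representation of exponential spacings as a black box, whereas the paper proves the needed special case from scratch. Concretely, the paper writes down the joint density of the two largest order statistics, $f_{X_{n-1},X_n}(x,y)=C\{F(x)\}^{n-2}f(x)f(y)$, changes variables to the spacing $g=y-x$, and evaluates the double integral to arrive at $1-e^{-\eta}$; your appeal to R\'enyi's representation collapses that computation into a one-line citation and, as you note, delivers the extra dividend that the successive spacings $B_k$ are mutually independent with $B_{S+k}\sim\mathrm{Exp}(1/k)$ after suitable scaling --- a fact the paper only recovers later, asymptotically, in Theorem~\ref{asympcov}. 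Both arguments rest on the identical premise, stated in the setup to Test-B, that conditional on event $B$ the $E_i=A_i^2$ are the order statistics of $M-S$ i.i.d.\ standard exponentials, so neither is doing more probabilistic work at that stage. One caveat you rightly flag as ``bookkeeping'' is in fact the only delicate point: with the Rayleigh c.d.f.\ $1-e^{-x^2/2}$ used in \eqref{orderstat}, the squares $A_i^2$ are exponential with mean $2$ rather than mean $1$, so the top spacing would have c.d.f.\ $1-e^{-x/2}$ unless the normalization is adjusted; this tension is present in the paper's own statement and is not introduced by your proof, but a complete write-up should resolve the convention once and use it consistently across Tests A and B.
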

\begin{proof}
$E_i$ are the order statistics of the standard exponential distribution. The c.d.f of $B_{\mathsmaller{S+1}}$ can now be obtained as follows. The joint pdf of $B_{\mathsmaller{n}}$ and $E_n$ is,
\begin{equation*}
f_{B_{\mathsmaller{n}},E_n}(g,y) = C\{F(y-g)\}^{n-2}f(y-g)f(y).
\end{equation*}
Hence the pdf of the test statistics $B_{n}$ is
\begin{equation*}
f_{B_{\mathsmaller{n}}}(g) = \int_{0}^{\infty}C\{F(y-g)\}^{n-2}f(y-g)f(y)dy.
\end{equation*}
The cdf of the test statistics $G$ is given by
\begin{align*}
F_{B_{\mathsmaller{n}}}(\eta) &= \int_{0}^{\eta}\int_{0}^{\infty}C\{F(y-g)\}^{n-2}f(y-g)f(y)dydg\\
&=1-\exp(-\eta), n = S+1,\ldots,M.
\end{align*}
\end{proof}
Again, the problem of finding $S$ sources reduces to comparing $B_{k}$ with a threshold ($\eta$) at each knot point. The threshold is obtained from the c.d.f \eqref{gammapdf} by fixing $F_{B_{\mathsmaller{S+1}}}$ to required $P_c$.  
\subsubsection{Unknown noise variance}
Here, we propose a test statistics for the case when the noise variance is unknown and needs to be estimated. We retain the orthogonality and perfect grid matching assumptions discussed at the beginning.

\emph{Test-C:} We choose the estimate of the noise variance as $\hat{\sigma}^2 = \|\b-\A\xhat_{I}\|^{2}_{2}$, where $\xhat_{I}$ is the least-square estimate using the model after $(M-1)$ steps of Algorithm-\ref{Algo1}. The reason for the choice of using $(M-1)$ supports for estimating variance is that it is well known that an antenna array of $M$ elements can recover at-most $(M-1)$ sources \cite{van2002optimum}, hence the effect of all the sources impinging the array is removed from the measurements after $M-1$ steps. Now, we propose the test statistics at the $k^{th}$ knot as,
\begin{flalign}
&C_{k} =\frac{\tau^{2}_{k+S}}{\hat{\sigma}^{2}} =\frac{A^{2}_{k}}{\hat{\sigma}^2/\sigma^2}, k=1,\ldots,l-1, &
\end{flalign}
where $l = M-S$. The distribution of $C_1$ conditional on event $B$ is required for detecting the $S$ sources and is given by,
\begin{thm}
The c.d.f of $C_{1}$, conditional on event $B$ is,
\begin{equation}
F_{C_{1}}(\eta) = \SUM{r}{0}{l}{(-1)^{r}\binom{l}{r}\left(1+r\eta\right)^{-1}}.
\label{Ftest}
\end{equation}
\end{thm}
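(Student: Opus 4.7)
The plan is to identify the distributions of the numerator and denominator of $C_1$, establish their independence, and then integrate. The proof mirrors the earlier derivations of $F_{A_1}$ and $F_{B_{S+1}}$ in the section, but with an added wrinkle because $\hat\sigma^2$ is itself random.

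First, I would condition on event $B$ and reduce to noise-only coordinates. Orthogonality ($\A^H\A=\I$) together with $\v\sim CN(\mathbf{0},\sigma^2\I)$ makes the inner products $\a_j^H\v$ mutually independent $CN(0,\sigma^2)$; under $B$ the $S$ source-bearing knots sit at the top of the sorted $|\a_j^H\b|$'s, so the remaining $l=M-S$ sorted knots satisfy $\tau_{S+k}^2/\sigma^2 = E_{(l-k+1)}$, the decreasing order statistics of $l$ i.i.d.\ $\mathrm{Exp}(1)$ variates. In particular $A_1^2/\sigma^2$ is the maximum of $l$ i.i.d.\ $\mathrm{Exp}(1)$'s, with cdf $(1-e^{-x})^l$.

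Second, I would identify the law of $\hat\sigma^2/\sigma^2$. By orthogonality, the LS fit on the $M-1$ columns active after $M-1$ path steps collapses to $\b-\A\xhat_I=\a_{j_0}(\a_{j_0}^H\b)$ for the unique unselected index $j_0$; conditional on $B$ this index corresponds to a noise-only column, so $\hat\sigma^2/\sigma^2\sim\mathrm{Exp}(1)$. The crucial step is to argue that this exponential can be treated as independent of $A_1^2$; I would pursue this via the circular symmetry of the complex Gaussian noise combined with a Basu-type sufficiency argument on the order statistics, or equivalently by appealing to R\'enyi's spacings representation, in which the minimum of $l$ i.i.d.\ exponentials decouples from the spacings that together with it determine the other order statistics.

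Third, once independence is secured, the cdf computation is immediate. Writing $Y=\hat\sigma^2/\sigma^2$ and unconditioning,
\begin{equation*}
F_{C_{1}}(\eta) = \pr\!\left(A_1^2/\sigma^2 \le \eta Y\right) = \INT{0}{\infty}{(1-e^{-\eta y})^{l}\,e^{-y}\,dy},
\end{equation*}
and expanding $(1-e^{-\eta y})^{l}=\sum_{r=0}^{l}(-1)^{r}\binom{l}{r}e^{-r\eta y}$ and integrating term by term yields $\sum_{r=0}^{l}(-1)^{r}\binom{l}{r}/(1+r\eta)$, matching \eqref{Ftest}.

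The main obstacle will be the independence claim in the second step: both $A_1^2$ and $\hat\sigma^2$ are extracted from the same vector of $l$ noise coefficients, so a naive joint-density calculation leads to a ratio-of-extreme-order-statistics (max over min) distribution whose cdf has the form $\sum_{r=0}^{l-1}(-1)^{r}\binom{l-1}{r}\,l/(l+r(\eta-1))$, not \eqref{Ftest}. Producing the clean convolution with an independent $\mathrm{Exp}(1)$ required by \eqref{Ftest}, by a careful orthogonality or sufficiency argument that decouples the residual subspace selected by the first $M-1$ path steps from the knot $\tau_{S+1}$, is therefore the crux of the proof.
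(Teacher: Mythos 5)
Your third step is, in substance, exactly the paper's argument: the paper declares $C_1$ to be the maximum of equicorrelated $F$-type variables (i.i.d.\ $\mathrm{Exp}(1)$ numerators divided by a common denominator taken to be an independent $\chi^2_2$ variate) and quotes the resulting cdf from the order-statistics literature; that quoted cdf is precisely your integral $\int_0^\infty(1-e^{-\eta y})^{l}e^{-y}\,dy$ expanded term by term. So you have correctly reconstructed the intended derivation of \eqref{Ftest}.

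The problem is the independence step you yourself single out as the crux: it is a genuine gap, and neither of your proposed patches closes it. R\'enyi's representation makes the minimum $E_{(1)}$ independent of the spacings, but the maximum is $E_{(l)}=E_{(1)}+\sum_{k=2}^{l}Z_k/(l-k+1)$, so the maximum and the minimum are positively dependent, not decoupled; and a Basu-type argument gives nothing here, since $\sigma$ is a pure scale parameter, the ratio $C_1$ is ancillary for it, and ancillarity of the ratio says nothing about independence of $E_{(l)}$ and $E_{(1)}$. Worse, under the paper's own definition the denominator is forced to be the minimum: with $\A$ unitary, the least-squares residual after $M-1$ steps is the projection onto the single unselected column, which is the column with the smallest $|\a_j^H\b|$, so conditional on $B$ one has $\hat\sigma^2/\sigma^2=E_{(1)}$, an exponential of rate $l$ (not rate $1$ as you first assert) that is built from the same $l$ noise coordinates as $A_1^2$. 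Your ``naive'' max-over-min computation, giving $\sum_{r=0}^{l-1}(-1)^{r}\binom{l-1}{r}\,l/(l+r(\eta-1))$, is therefore the distribution that actually follows from the stated definitions, and it does not coincide with \eqref{Ftest}. The paper's proof makes the same unjustified leap — it simply asserts that $\hat\sigma^2$ is a $\chi^2_2$ variate serving as a common denominator independent of the numerators — so the gap you identify is real, is present in the paper's own proof, and is not repaired by the sufficiency or spacings arguments you sketch; closing it would require either a different, genuinely independent variance estimate or a corrected statement of the theorem.
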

\begin{proof}
We observe that $\frac{\hat{\sigma}^2}{2\sigma^2}$ is a $\chi^2$ random variable with $2$ degrees of freedom for all $k$, \emph{i.e.}, $\hat{\sigma}^{2}\backsim\chi_{2}^{2}$ and $R_{k}^{2}/2,k=1,2,\ldots,l-1$ are the order statistics of a $\chi^2$ random variable with $2$ degree of freedom. Hence, $C_{1}$ is the maximum of $F$ random variables with equal correlations, whose distribution is given by \cite{Gupte,osyoung},
\begin{equation*}
F_{C_{1}}(\eta) = \SUM{r}{0}{k}{(-1)^{r}\binom{k}{r}\left(1+r\eta\right)^{-1}}
\end{equation*}
\end{proof}
Again, the problem of finding $S$ sources reduces to comparing $C_{k}$ with a threshold ($\eta$) at each knot point. The threshold is obtained from the c.d.f \eqref{Ftest} by fixing $F_{C_{1}}$ to required $P_c$. We note that test-$C$ proposed here is very similar to the to $F$ test used in the least squares regression for selecting the best model. However, the main difference is that the threshold in the least squares regression setup is evaluated by observing the degree of the $F$ random variable at each step, whereas here we show that the maximum of equicorrelated $F$ random variables is a better test statistics for evaluating the threshold. We summarize the steps for detection and estimation of DoAs with orthogonal measurement model in Algorithm-\ref{Algo1} using test-$A$ as an example. All the other tests described earlier can be implemented by evaluating the corresponding test statistics in step-$3$ of the algorithm.
\begin{algorithm}
\caption{Algorithm for Detection and Estimation}
\label{Algo1}
\begin{algorithmic}[1]
\State\textbf{Inputs:} $\b$, $\A$, $\boldsymbol{\eta}$ (obtained by inverting the c.d.f).
\State\textbf{Initialize:} Set $i=M-1$, $\hat{S} = 0$, [$\mathcal{I}$,$\boldsymbol{\tau}] =$ sort($|\A^{H}\b|$).
\State\textbf{Evaluate:} Evaluate the test statistics $A_i$.
\State\textbf{Decision: If} $A_i\geq\eta_i$ \textbf{go to} step $6$
\State\textbf {Iterate:} Decrease $i$ by 1 and iterate from step $3$.
\State\textbf{Outputs:} $\hat{S}= i$, $\mathsmaller{\hat{T}=\mathcal{I}(1,2,\ldots,\hat{S})}$, $\hat{\tau} = \boldsymbol{\tau}(\hat{S})$, $\hat{\boldsymbol{\rho}}=\boldsymbol{\rho}(\hat{T})$.
\end{algorithmic}
\end{algorithm}

\subsubsection{Low SNR scenarios}
We observe that the tests proposed for orthogonal models require the probability of event $B$ to be close to $1$ (\emph{i.e.}, $P(B)\rightarrow 1$) for obtaining rate control w.r.t $P_c$. For orthogonal models, it was shown that moderate to high SNR scenarios are sufficient for $P(B)\rightarrow 1$. Here, we make some comments on low SNR scenarios and explain the difficulty for proposing tests at low SNR scenarios. 

We observe that the tests discussed above depended on some functions of the p.d.f of the estimator, $\xhat$. For e.g., the knot points correspond to singularities of $\xhat$. So it would be useful to consider the p.d.f. of the lasso estimator. For a real linear model in real Gaussian noise we have,
\begin{thm}
The p.d.f of the lasso estimator, $\xhat$ for orthogonal models ($\A^{H}\A=\I$) is given by, 
\begin{flalign}
&f_{\xhats_k}(\xhats_k) =  \left\{
\begin{array}{rl}
\frac{1}{\sqrt{2\pi\sigma^2}}\exp\big(-\frac{(\xhats_k+\tau-x_k)^2}{2\sigma^2}\big)& \text{if } \xhats_k>0,\\
\frac{1}{\sqrt{2\pi\sigma^2}}\exp\big(-\frac{(\xhats_k-\tau-x_k)^2}{2\sigma^2}\big) & \text{if } \xhats_k<0.
\end{array}\right. &
\label{lassopdf_orth}
\end{flalign}
\end{thm}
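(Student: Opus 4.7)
The plan is to derive the density as a direct consequence of the closed-form soft-thresholding solution \eqref{lassorth}, followed by an elementary change of variables. First, I would specialize \eqref{lassorth} to the real orthogonal case: because $\A^{T}\A = \I$, the correlation $y_k := \a_k^{T}\b$ decomposes as $y_k = x_k + \a_k^{T}\v$, and since $\|\a_k\|_2 = 1$ the noise component $\a_k^{T}\v$ is $\mathcal{N}(0,\sigma^2)$. Hence $y_k \sim \mathcal{N}(x_k,\sigma^2)$. Rewriting \eqref{lassorth} in the real case produces the usual soft-thresholding rule
\begin{equation*}
\hat{x}_k \;=\; \mathrm{sign}(y_k)\,\max\!\bigl(|y_k|-\tau,\,0\bigr),
\end{equation*}
so that the three regimes $y_k>\tau$, $y_k<-\tau$, and $|y_k|\le\tau$ correspond respectively to $\hat{x}_k>0$, $\hat{x}_k<0$, and $\hat{x}_k=0$.

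Next, on the branch $\hat{x}_k>0$ the map $y_k\mapsto y_k-\tau$ is an affine bijection with unit Jacobian, so the change-of-variables formula immediately gives $f_{\hat{x}_k}(u) = f_{y_k}(u+\tau)$ for $u>0$, which upon substituting the $\mathcal{N}(x_k,\sigma^2)$ density yields the first branch of \eqref{lassopdf_orth}. An identical computation on $\hat{x}_k<0$, using $y_k = \hat{x}_k - \tau$, yields the second branch. No further analytic work is required; the theorem is essentially a one-line consequence of soft thresholding combined with the identity $y_k \sim \mathcal{N}(x_k,\sigma^2)$.

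The only genuine subtlety — and where I would expect a careful reader to pause — is the atom at $\hat{x}_k=0$, which carries strictly positive mass $\pr(|y_k|\le\tau) = \Phi\!\bigl((\tau-x_k)/\sigma\bigr)-\Phi\!\bigl((-\tau-x_k)/\sigma\bigr)$. The two displayed Gaussian pieces therefore integrate to $1-\pr(|y_k|\le\tau)$ rather than to $1$, so the expression in \eqref{lassopdf_orth} should be understood as the density of the absolutely continuous part of the law of $\hat{x}_k$, with the remaining mass concentrated at the origin. I would state this explicitly so that the mixed (atom-plus-density) nature of the lasso estimator is not glossed over; beyond this bookkeeping, there is no real obstacle to overcome.
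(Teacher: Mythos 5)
Your derivation is correct, and since the paper gives no in-text proof of this theorem (it simply defers to an external reference), the soft-thresholding-plus-change-of-variables argument you present is the standard and essentially unavoidable route to \eqref{lassopdf_orth}: with $\A^{T}\A=\I$ the statistic $y_k=\a_k^{T}\b$ is $\mathcal{N}(x_k,\sigma^2)$, and each branch is an affine image of $\{y_k>\tau\}$ or $\{y_k<-\tau\}$ with unit Jacobian. Your caveat that \eqref{lassopdf_orth} is only the absolutely continuous part of a mixed law, with an atom of mass $\pr(|y_k|\le\tau)$ at $\hat{x}_k=0$, is exactly right and is implicitly confirmed by the paper's own subsequent use of $\pr(\hat{x}_k=0\,|\,x_k\neq 0)$ in its $P_e$ computation.
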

\begin{proof}
See \cite{RJ16lassodistarx}.
\end{proof}
We observe from \eqref{lassopdf_orth} that the p.d.f of the lasso estimator is continuous function of $\xhats_{k}$ except for the discontinuities at the knot points (when $\xhats_{k} = 0$ or $\tau = \tau_{k} = |\a_{k}^{H}\b|$). In order to understand the problems for proposing tests at low SNR, we study the expression for probability of error $P_e$ given by,
\begin{flalign*}
&P_e = P_{m} + P_{f}&\\
&= \pr(\xhats_{k}=0|x_{k}\neq 0) + \pr(\xhats_{k}\neq 0|x_{k}=0)&\\
&=\Phi(\frac{\tau-x_{k}}{2\sigma})-\Phi(\frac{-\tau-x_{k}}{2\sigma})+ G(\tau),& 
\end{flalign*}
where, $\Phi(.)$ denotes the cdf of normal random variable and $G(.)$ is a function of $\tau$ only. In the above expression, we observe that $P_m$ is a function of both $x_k$ (unknown) and $\tau$, whereas $P_f$ is only a function of $\tau$. This dependence of $P_e$ (obtained from p.d.f of $\xhats_k$) on the unknown parameter $x_k$ makes it difficult for proposing test statistics to control $P_c$. Hence, conditioning tests over event $B$ translates to assuming that $P_m\rightarrow 0$ as $\sigma\rightarrow 0$ (or moderate to high SNR), which is a good assumption for orthogonal models. We also observe that there is still complete control over $P_f$ for orthogonal models, which is usually the main objective in classical hypothesis testing. Finally, we note that controlling $P_m$ requires the prior knowledge of $\x_{k}$, which is possible in communication scenario wherein $\x$ are symbols transmitted from a predefined code-book. Hence, in a communication scenario, it may be possible to calculate exact expressions for $P_{m}$ (and $P_c$).
\subsection{Non-Orthogonal Models}
\label{conventionalmodel}
We now obtain tests for the case where the estimation grid is over-sampled to $N>>M$ bins to obtain a fat array steering matrix ($\A$). We retain the assumption that all the source locations are perfectly matched to the estimation grid. From the discussions on orthogonal models, we observed that test statistics to control $P_c$ can be proposed at knot points. Hence, we will first study the knot points of the lasso for a fat matrix $\A$. The first knot point of the lasso occurs at $\tau_{1} = \max\limits_{k} |\a_{k}^{H}\b|$. The process of finding the subsequent knots is summarized in Algorithm-\ref{Algo2}.

\emph{Remark-2}: For non-orthogonal model, we observe from simulations (section-\ref{sim}) that the following two sufficient conditions are required for $\pr(B)\to1$. Firstly, the power of the weakest source should be large compared to the noise power or the detection should be performed in the moderate to high SNR regime. Secondly, the sources should be well separated.  

We now propose a test at the knot points. The goal of the proposed test is to detect the $(S+1)^{th}$ knot point (where $S$ is unknown), conditional on event $B$.

\emph{Test-D:} The $D$ test statistics at the $k^{th}$ knot is defined as,
\begin{equation}
D_{k} = \frac{\tau^{2}_{k}}{\sigma^2}.
\end{equation}
Again, assuming event $B$ is true (\emph{i.e.}, $P(B)\to1$), we need to make a decision at $(S+1)^{th}$ knot. Hence, we require the c.d.f of $D_{1}$, given by
\begin{thm}\label{oversamp}
The c.d.f of $D_{1}$, conditional on event $B$ is,
\begin{equation}
F_{D_{1}}(\eta) = \PROD{i}{1}{M-S}{(1-e^{-\eta/\varrho_{i}})},
\label{testd}
\end{equation}
where $\varrho_{i}$ are the $M-S$ non-zero eigen values of the matrix $\Q_{\mathsmaller{M-S}}$, whose construction is described in the proof.
\end{thm}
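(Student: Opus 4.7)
The plan is to exploit event $B$ to pin down the lasso active set at the $(S+1)$-th knot, express $D_1$ as a maximum of squared correlations with a structured complex Gaussian vector, and finally read off the c.d.f.\ from the eigendecomposition of its covariance.

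First, conditional on $B$ and using the KKT characterization from the first Theorem together with the knot evolution in Algorithm-\ref{Algo2}, the lasso active set just before $\tau_{S+1}$ equals the true support $\tilde T$, and the $(S+1)$-th knot point satisfies
\begin{equation*}
\tau_{S+1} = \max_{j\notin\tilde T}\bigl|\a_j^H (\b - \A_{\tilde T}\hat{\x}_{\tilde T}(\tau_{S+1}))\bigr|.
\end{equation*}
Since $\b - \A_{\tilde T}\hat{\x}_{\tilde T}(\tau_{S+1})$ equals the oblique projection of $\b$ away from $\mathrm{span}(\A_{\tilde T})$ plus a $\tau_{S+1}$-dependent correction lying in $\mathrm{span}(\A_{\tilde T})$, and the latter piece is annihilated by the orthogonal projector $\P_{\tilde T^\perp}=\I-\A_{\tilde T}(\A_{\tilde T}^H\A_{\tilde T})^{-1}\A_{\tilde T}^H$, I would argue that the inactive correlations reduce to $\a_j^H\P_{\tilde T^\perp}\v$ for $j\notin\tilde T$ (the true signal is annihilated as well since $\P_{\tilde T^\perp}\A_{\tilde T}\x_{\tilde T}=\mathbf{0}$).

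Next, I stack these correlations into the normalized vector $\z = \sigma^{-1}\A_{\tilde T^c}^H\P_{\tilde T^\perp}\v$. Because $\v$ is complex white Gaussian of variance $\sigma^2$ and $\P_{\tilde T^\perp}$ is a rank-$(M-S)$ orthogonal projector, $\z$ is mean-zero complex Gaussian with covariance
\begin{equation*}
\Q_{M-S} \;:=\; \A_{\tilde T^c}^H\,\P_{\tilde T^\perp}\,\A_{\tilde T^c},
\end{equation*}
which has exactly $M-S$ non-zero eigenvalues $\varrho_1,\ldots,\varrho_{M-S}$; this is the construction referenced in the theorem, and by definition $D_1 = \tau_{S+1}^2/\sigma^2 = \max_j |z_j|^2$.

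The last step is to diagonalize $\Q_{M-S}$ by a spectral decomposition and to represent $\z$ as $\mathbf{U}\Lambda^{1/2}\g$ with $\g\sim\mathcal{CN}(\mathbf{0},\I_{M-S})$ and $\Lambda=\diag(\varrho_1,\ldots,\varrho_{M-S})$. The $N-S$ entries of $\z$ then live on an $(M-S)$-dimensional subspace spanned by the eigen-directions of $\Q_{M-S}$, and reindexing the maximum along these directions turns $D_1$ into $\max_{i=1}^{M-S}\varrho_i|g_i|^2$. Since each $|g_i|^2$ is an independent unit-mean exponential, $\varrho_i|g_i|^2\sim\mathrm{Exp}(1/\varrho_i)$, and independence of the $g_i$ yields the stated product c.d.f. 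The chief technical obstacle is precisely this last reduction: in general a maximum of $N-S$ correlated Gaussian moduli-squared is \emph{not} equal to the maximum of the $M-S$ independent eigen-components, so I expect the appendix proof to justify the identification either via the rank-$(M-S)$ structure of $\P_{\tilde T^\perp}$ together with the moderate-to-high SNR and well-separated-source regime invoked in Remark-2, or by reinterpreting $D_1$ directly as a maximum over the eigen-directions of $\Q_{M-S}$ rather than over the columns of $\A_{\tilde T^c}$.
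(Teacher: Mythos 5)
Your reduction of $D_1$ to the maximum of correlated complex-Gaussian moduli squared, with covariance $\Q_{\mathsmaller{M-S}}=\A_{\tilde{T}^c}^{H}\P_{\tilde{T}^{\perp}}\A_{\tilde{T}^c}$ of rank $M-S$, is the same overall strategy as the paper's appendix: there too the KKT system at the $(S+1)$-st knot is solved to give $\Lambda_r=|\a_r^{H}\Q_{\mathsmaller{M-S}}\v|$ for $r\in J^c$, and $D_1$ is treated as the maximum of correlated $\chi^2$ variates whose covariance eigenvalues are the $\varrho_i$. One caveat on the way there: for a non-orthogonal $\A$ the inactive correlation is $\a_j^{H}(\b-\A_J\xhat_J)=\a_j^{H}\P_{\tilde{T}^{\perp}}\v+\tau\,\a_j^{H}\A_J(\A_J^{H}\A_J)^{-1}\mathbf{s}$; the shrinkage term lies in $\mathrm{span}(\A_J)$ but is \emph{not} annihilated when you hit it with $\a_j^{H}$ rather than with $\P_{\tilde{T}^{\perp}}$, so your claim that the inactive correlations ``reduce to $\a_j^{H}\P_{\tilde{T}^{\perp}}\v$'' requires the coupled system the paper actually solves (the $|J|$ equations equating the candidate and active correlation magnitudes jointly in $\xhat_J$ and $\Lambda_k$), not a one-line projection argument.

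The more serious issue is the step you yourself flag as the ``chief technical obstacle'': the maximum of the $N-S$ correlated entries of $\z$ is not the maximum of the $M-S$ independent eigen-components $\varrho_i|g_i|^2$, and neither the rank-$(M-S)$ structure nor the moderate-to-high-SNR regime of Remark-2 repairs that identification, so your final step does not yield the product c.d.f. The paper closes this gap by a different device: it writes $F_{D_1}(\eta)=\int f_{\u}(\u)\,\mathbb{I}(\u,\eta)\,d\u$, invokes Parseval to replace the (degenerate) joint density by the characteristic function $\bigl(\det(\I-j\,\mathrm{Diag}(\z)\R_{\mathsmaller{M-S}})\bigr)^{-1}$ of the correlated $\chi^2$ vector taken from the multivariate Rayleigh literature, and then evaluates the resulting multiple integral factorwise to obtain $\prod_{i=1}^{M-S}(1-e^{-\eta/\varrho_{i}})$. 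That characteristic-function computation is the ingredient missing from your write-up; the eigendecomposition-plus-independence route you sketch cannot substitute for it, exactly for the reason you state.
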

\begin{proof}
See Appendix-\ref{A3}
\end{proof}
Similar to other tests, the problem of finding $S$ sources reduces to comparing $D_{k}$ with a threshold ($\eta$) at each knot point. The threshold is obtained from the c.d.f \eqref{testd} by fixing $F_{D_{1}}$ to the required $P_c$.
\begin{algorithm}
\caption{Algorithm for Detection and Estimation}
\label{Algo2}
\begin{algorithmic}[1]
\State\textbf{Inputs:} $\b$, $\A$, $\boldsymbol{\eta}$ (obtained by inverting the c.d.f).
\State\textbf{Initialize:} Set $k=1$, $\hat{S} = 0$, $\tau_{1} =  \max\limits_{k} |\a_{k}^{H}\b|$.
\State The active set $J = \{j_{1},j_{2},\ldots,j_{n}\}$ is determined by solving \eqref{sol0} at $\tau_{k}$.
\State For each $k\notin J$, solve the following system of equations for a vector $\xhat =[\xhats_{1},\ldots,\xhats_{n}]$ and a set $\Lambda_{k}$. 
\begin{equation*}
\left\lbrace \a_{j_{l}}^{H}(\b-\A_{J}\xhat)=\Lambda_{k}\frac{\xhats_{l}}{|\xhats_{l}|}\right\rbrace_{l=1}^{n},|\a_{j}^{H}(\b-\A_{J}\xhat)|=\Lambda_{k}
\end{equation*}
If the system is infeasible, we put $\Lambda_{j}=0$.
\State\textbf{Evaluate:} Evaluate the test statistics $D_k$.
\State\textbf{Decision: If} $D_k\geq\eta_i$ \textbf{go to} step $8$
\State\textbf{Iterate:} The next knot is given by, $\tau_{k+1} = \max\limits_{k}\Lambda_{j}$.
\State\textbf{Outputs:} $\hat{S}= i$, $\mathsmaller{\hat{T}=\mathcal{I}(1,2,\ldots,\hat{S})}$, $\hat{\tau} = \boldsymbol{\tau}(\hat{S})$, $\hat{\boldsymbol{\rho}}=\boldsymbol{\rho}(\hat{T})$.
\end{algorithmic}
\end{algorithm}

\subsection{Grid Matching}
\label{gmdetect}
For accurate detection and estimation of sources, we require the source locations to be matched with the estimation grid. The popular way to deal with the grid-mismatch problem in practice is to over-sample the estimation grid into $N >> M$ bins to obtain a fat array steering matrix, $\A$ and hope that all the source locations are perfectly matched to the estimation grid of $\A$. However, as discussed in Section-\ref{model}, it has been shown in \cite{CalBnk} that fine sampling of the estimation grid does not necessarily guarantee perfect grid matching. There is always a non-zro probability that all the sources are not aligned on the estimation grid. Moreover, there is also the problem of columns of $\A$ becoming correlated, thus reducing its incoherence. This also means that the signal is less sparse or even no longer sparse in the spatial domain \cite{CalBnk,TS}. Hence, the grid matching model discussed in Section-\ref{model} may be used for detection and estimation of off-grid sources.

The block sparse estimator for parameter estimation and grid matching can be formulated as the following group lasso optimization problem,
\begin{equation}
\hat{\y} = \arg\min_{\y}\hspace{1mm}\frac{1}{2}\NrmTwo{\overline{\b}-\P\y}^2+\tau\SUM{g}{1}{N}{\NrmTwo{\y_{g}}},\label{SOCP}
\end{equation}
where $\mathbf{y}_g = [x_g,p_gx_g]^{T}$, $g = 1,2,\ldots,N$ and $\P = [\I|\G]$. We now obtain the optimality conditions for the above optimization as,  
\begin{thm}
The solution of the group-lasso estimator satisfies the following K.K.T conditions
\begin{align}
&\P_{g}^{H}(\overline{\b}-\P\yhat) = \tau\frac{\yhat_{g}}{\|\yhat_{g}\|_{2}}&\forall\yhat_{g}\neq \mathbf{0},&\label{glass}\\
&\|\P_{g}^{H}(\overline{\b}-\P\yhat)\|_{2}\leq\tau&\forall\yhat_{g}=\mathbf{0}.&
\end{align}
where $\P_{g} = [\e_{g}|\g_{g}]$, $\e_g$ and $\g_g$ are $g^{th}$ column of $\I$ and $\G$.
\end{thm}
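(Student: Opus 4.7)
The plan is a routine convex-subdifferential calculation applied to the non-smooth group-lasso objective
\[
L(\y) = \tfrac{1}{2}\NrmTwo{\overline{\b}-\P\y}^{2} + \tau\SUM{g}{1}{N}{\NrmTwo{\y_{g}}}.
\]
Since each block norm $\NrmTwo{\y_g}$ is convex and the quadratic data-fit term is convex and differentiable, $L$ is a convex function. By Fermat's rule, $\yhat$ solves \eqref{SOCP} if and only if $\mathbf{0}\in\partial L(\yhat)$. Because the penalty decouples across disjoint blocks $\y_1,\dots,\y_N$, the subdifferential splits block-wise, so I would write the stationarity condition separately for each group $g$.

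First I would differentiate the quadratic term with respect to the components belonging to block $g$, i.e.\ those indexed by the two columns of $\P_{g}=[\e_g\,|\,\g_g]$. This produces the gradient $-\P_{g}^{H}(\overline{\b}-\P\yhat)$ (treating the complex case in the same Wirtinger sense used implicitly earlier in the paper for Theorem 1). Next I would recall the subdifferential of the $\ell_2$-norm: for a nonzero vector $\v$, $\partial\NrmTwo{\v}=\{\v/\NrmTwo{\v}\}$, a singleton; for $\v=\mathbf{0}$, $\partial\NrmTwo{\v}=\{\u:\NrmTwo{\u}\le 1\}$, the closed unit ball. Both facts are immediate from the definition of a subgradient together with the Cauchy--Schwarz inequality.

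Combining these two computations, the block-$g$ inclusion $\mathbf{0}\in -\P_{g}^{H}(\overline{\b}-\P\yhat)+\tau\,\partial\NrmTwo{\yhat_g}$ splits into the two claimed cases. If $\yhat_g\neq\mathbf{0}$, the subdifferential is a singleton and the inclusion becomes the equality \eqref{glass}. If $\yhat_g=\mathbf{0}$, the inclusion asserts the existence of some $\u$ with $\NrmTwo{\u}\le 1$ satisfying $\P_{g}^{H}(\overline{\b}-\P\yhat)=\tau\u$, which is equivalent to $\NrmTwo{\P_{g}^{H}(\overline{\b}-\P\yhat)}\le\tau$.

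There is no real obstacle here; the only point requiring mild care is justifying the block-wise split of the subdifferential and the chain rule through the linear block-selection map $\y\mapsto\y_g$, both of which are standard in finite dimensions. The only subtlety specific to this paper is ensuring the derivation is carried out over $\mathbb{C}$, which is handled either by identifying $\mathbb{C}^{n}\cong\mathbb{R}^{2n}$ or by using the same complex-subgradient conventions already adopted in Theorem~1 of the paper.
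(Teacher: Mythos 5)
Your proposal is correct and follows the standard subdifferential/Fermat's-rule derivation, which is exactly the argument in the reference the paper cites for this theorem (the paper itself gives no independent proof, deferring entirely to that source). The block-wise splitting, the two cases of $\partial\NrmTwo{\cdot}$, and the complex-case caveat are all handled appropriately, so nothing further is needed.
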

\begin{proof}
See \cite{glassoorigin}
\end{proof}
We can immediately notice that the first knot point is given by $\max\limits_{g}\|\P_{g}^{H}\overline{\b}\|_{2}$. The process of finding the knots is summarized in Algorithm-\ref{Algo3}. 

\emph{Remark-3}: For grid matching model, we observe from simulations (section-\ref{sim}) that $P(B)\to1$ if the sufficient conditions mentioned in \emph{Remark-2} are satisfied.
\begin{algorithm}
\caption{Algorithm for Detection and Estimation}
\label{Algo3}
\begin{algorithmic}[1]
\State\textbf{Inputs:} $\b$, $\A$, $\boldsymbol{\eta}$ (obtained by inverting the c.d.f).
\State\textbf{Initialize:} Set $k=1$, $\hat{S} = 0$, $\tau_{1} =  \max\limits_{k} \|\P_{j_{k}}^{H}\overline{\b}\|_{2}$.
\State The active groups $J = \{j_{1},j_{2},\ldots,j_{n}\}$ is determined by solving \eqref{glass} at $\tau_{k}$.
\State  For each $k\notin J$ solve the following system of equations for the blocks $\yhat =[\yhat_{1},\ldots,\yhat_{n}]$ and a set $\Lambda_{k}$. 
\begin{equation*}
\left\lbrace \P_{j_{l}}^{H}(\overline{\b}-\P_{J}\yhat)=\Lambda_{k}\frac{\yhat_{l}}{\|\yhat_{l}\|}\right\rbrace_{l=1}^{n},\|\P_{j}^{H}(\overline{\b}-\P_{J}\yhat)\|=\Lambda_{k}
\end{equation*}
If the system is in-feasible, we put $\Lambda_{j}=0$.
\State\textbf{Evaluate:} Evaluate the test statistics $E_k$.
\State\textbf{Decision: If} $E_k\geq\eta_i$ \textbf{go to} step $8$
\State\textbf{Iterate:} The next knot is given by, $\tau_{k+1} = \max\limits_{k}\Lambda_{k}$.
\State\textbf{Outputs:} $\hat{S}= i$, $\mathsmaller{\hat{T}=\mathcal{I}(1,2,\ldots,\hat{S})}$, $\hat{\tau} = \boldsymbol{\tau}(\hat{S})$, $\hat{\boldsymbol{\rho}}=\boldsymbol{\rho}(\hat{T})$.
\end{algorithmic}
\end{algorithm}


We now propose a test at the knot points of the group-lasso path. The goal of the proposed test is to detect the $(S+1)^{th}$ knot point (where $S$ is unknown), conditional on event $B$.

\emph{Test-E:} The $E$ test statistics at the $k^{th}$ knot is defined as,
\begin{equation}
E_{k} = \frac{\tau_{S+1}}{\sigma^2}, k=1,2,\ldots,M-S-1,
\end{equation}
Again, assuming event $B$ is true (\emph{i.e.},, $P(B)\to1$), we need to make a decision at $(S+1)^{th}$ knot. Hence, we require the c.d.f of $E_{\mathsmaller{S+1}}$, given by
\begin{thm}\label{testGMM}
The c.d.f of $E_{1}$, conditional on event $B$ is,
\begin{equation}
F_{E_{1}}(\eta) = \PROD{i}{1}{M-S}{\frac{\varrho_{i}}{\varrho_{i}-\varepsilon_{i}}(1-e^{-\eta/\varrho_{i}})-\frac{\varepsilon_{i}}{\varrho_{i}-\varepsilon_{i}}(1-e^{-\eta/\varepsilon_{i}})},
\label{testE}
\end{equation}
where $\varrho_{i}\geq\varepsilon_{i}, i = 1,2,\ldots,M-S$ are the $M-S$ non-zero eigen values corresponding to the $\chi^{2}$ random variables as described in the proof.
\end{thm}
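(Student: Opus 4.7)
My plan is to condition throughout on event $B$, so the first $S$ knots of the group-lasso path identify the true active set $J^*$ and $\tau_{S+1}$ is generated purely by noise. The first step is to write $\tau_{S+1}^2 = \max_{g\notin J^*}\|\P_g^H \r\|_2^2$, where $\r=\overline{\b}-\P_{J^*}\yhat_{J^*}$ is the residual after fitting the $S$ active groups, and to observe that under event $B$ the signal component of $\overline{\b}$ is absorbed by the active-block fit, so $\r$ becomes a linear function of $\overline{\v}=\A^H\v$ alone and inherits a complex-Gaussian distribution on the noise subspace orthogonal to $\mathrm{span}(\P_{J^*})$.

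Next, I would derive the single-block c.d.f.\ of $\|\P_g^H\r\|_2^2$. Since $\P_g = [\e_g\mid\g_g]$ has only two columns, the Gram matrix $\P_g^H\P_g$ is $2\times 2$ with eigenvalues $\varrho_g\geq\varepsilon_g$. Diagonalizing $\P_g\P_g^H$ on the noise subspace, the quadratic form $\|\P_g^H\r\|_2^2$ decomposes as $\varrho_g|W_{1,g}|^2 + \varepsilon_g|W_{2,g}|^2$ with $W_{1,g},W_{2,g}$ independent standard complex Gaussians. Since each $|W_{i,g}|^2$ is exponential, a direct convolution gives
\begin{equation*}
\pr\!\bigl(\varrho_g|W_{1,g}|^2 + \varepsilon_g|W_{2,g}|^2 \le \eta\bigr) = \tfrac{\varrho_g}{\varrho_g-\varepsilon_g}\bigl(1-e^{-\eta/\varrho_g}\bigr) - \tfrac{\varepsilon_g}{\varrho_g-\varepsilon_g}\bigl(1-e^{-\eta/\varepsilon_g}\bigr),
\end{equation*}
which is exactly the single factor appearing in \eqref{testE}.

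Finally, I would pass from the single-block c.d.f.\ to the joint c.d.f.\ of the maximum. After an orthogonal change of basis of the noise subspace, the collection $\{\P_g^H\r\}_{g\notin J^*}$ can be rewritten in terms of $M-S$ independent pairs $(W_{1,i},W_{2,i})$ of complex Gaussians, with associated eigenvalues $(\varrho_i,\varepsilon_i)$ obtained by diagonalizing the joint block-Gram structure of the inactive $\P_g$'s restricted to this subspace. Independence across $i$ then makes the c.d.f.\ of the maximum equal to the product of single-block c.d.f.s, yielding \eqref{testE}.

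The main obstacle is exactly this last reduction: the raw block statistics $\|\P_g^H\r\|_2^2$ are a priori correlated across the $N-S$ choices of $g$, and one needs to justify that projecting onto the noise subspace and diagonalizing yields $M-S$ genuinely independent two-term sums with the stated eigenvalues. I expect to follow the construction used for Theorem \ref{oversamp} (Test-D), where an analogous matrix $\Q_{M-S}$ accomplishes this reduction in the rank-one setting; here the rank-two block structure of $\P_g$ simply upgrades each scalar exponential factor in \eqref{testd} into the two-term mixture computed above, so \eqref{testE} is the natural block-sparse analogue of \eqref{testd}.
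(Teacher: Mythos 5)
Your reduction is the same as the paper's up to the last step: condition on $B$, write $\tau_{S+1}$ as the maximum over inactive groups of $\Lambda_g=\|\P_g^H\r\|_2$ with $\r$ a projection of the noise (the paper writes $\Lambda_r=\|\P_r^H\Q_{2(M-S)}\v\|$), note that each $\Lambda_g^2/\sigma^2$ is a rank-two quadratic form in complex Gaussians, and your per-block computation (diagonalize the $2\times 2$ matrix, convolve two exponentials with rates $1/\varrho_g$ and $1/\varepsilon_g$) correctly reproduces each individual factor of \eqref{testE}. Up to that point the two arguments agree.

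The genuine gap is exactly the step you flag and then assert: passing from the marginal c.d.f.\ of one block to the c.d.f.\ of the maximum over all inactive blocks. The statistics $\|\P_g^H\r\|_2^2$ are correlated across $g$, since they are all functionals of the same projected noise vector, and the c.d.f.\ of a maximum of correlated variables is not the product of the marginals. An orthogonal change of basis of the noise subspace whitens the underlying Gaussian but does not decouple the indexed family $\{\|\P_g^H\r\|_2^2\}_{g\notin J}$; each quadratic form still mixes all coordinates of the whitened noise, so genuine independence across blocks does not follow, and in general no such independent representation with the same joint law exists. The paper does not claim independence: it evaluates the joint probability of the box $[0,\eta]^{M-S}$ directly, converting the integral by Parseval's theorem and inserting the joint characteristic function of the correlated $\chi^2$ vector, $(\mathrm{det}(\I-j\mathrm{Diag}(\z)\R_{\mathsmaller{M-S}}))^{-1}(\mathrm{det}(\I-j\mathrm{Diag}(\z)\T_{\mathsmaller{M-S}}))^{-1}$ from \cite{mvray}; it is this determinant structure that produces the factorization over the eigenvalue pairs $(\varrho_i,\varepsilon_i)$, and these eigenvalues come from the two correlation matrices $\R$ and $\T$ of the whole collection, not from the individual Gram matrices $\P_g^H\P_g$ as you suggest. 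Without this (or an equivalent argument for why the joint c.d.f.\ factors despite the correlation), your proposal establishes only the marginal distribution of a single $\Lambda_g^2/\sigma^2$, not the distribution of $E_1$.
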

\begin{proof}
See Appendix-\ref{A4}.
\end{proof}
Similar to other tests, the problem of finding $S$ sources reduces to comparing $E_{k}$ with a threshold ($\eta$) at each knot point. The threshold is obtained from the c.d.f \eqref{testE} by fixing $F_{E_{1}}$ to the required $P_c$. 
\section{Numerical Simulations}
\label{sim}
In this section, we evaluate the performance of the proposed joint detection, estimation and grid matching algorithms discussed in the previous section. In the following, we will first discuss the simulation set-up, present the results obtained by the algorithms and interpret the results. 

\subsection{Simulation Setup}
The simulation setup consists of a uniform linear array (ULA) with $M = 8$ antennas, which is receiving signal from $S$ sources \cite{van2002detection}. The sources are chosen such that the total source power, $\mean\{\NrmTwo{\x}^2\} = 1$. In the case of multiple sources, all the sources are assumed to have equal power. We generate the estimation grid $\boldsymbol{\rho}$ by uniformly sampling the interval $[-\pi/2,\pi/2]$ into $N=8$ bins for orthogonal and grid matching models and $N=16$ bins for the non-orthogonal model. The array steering matrix, $\A$ of size $M\times N$ is then generated as explained in section \ref{model}. $\A$ is further normalized to avoid gain at the receiver. The Gaussian noise is generated by selecting the noise variance based on the given value of SNR (defined in section \ref{model}). 

The sources are detected and estimated as described in Algorithm-\ref{Algo1} to Algorithm-\ref{Algo3}. Grid matching is also performed while simulating the grid matching scenario using Algorithm-\ref{Algo3}. The threshold for all the simulations is set to maintain the required probability of correct detection of $P_c = 0.99$. In the following, we use Monte-Carlo simulations for $L = 10^5$ noisy realizations to evaluate the performance. We also calculate $\pr(B)$ for different scenarios by checking if the knot points corresponding to the sources occur first in the lasso path.
\subsection{Orthogonal Models}
In the simulations for the single source scenario, the source impinges the ULA from the angle $\boldsymbol{\rho}(5)$. Similarly, sources impinge ULA from angles $\boldsymbol{\rho}(3,6)$ for two source scenario. For three and four source scenarios the sources impinge the ULA from $\boldsymbol{\rho}(2,4,6)$ and $\boldsymbol{\rho}(2,4,6,7)$ angles respectively. Figure-\ref{fig:Fig01} shows the plot of $\pr(B)$ vs SNR. Tables (\ref{Ta1}-\ref{Ta5}) show the $\hat{P}_c$ obtained by Algorithm-\ref{Algo1} based on tests mentioned in the caption of the table (Asymptotic covariance test, Exact covariance test, Test-A, Test-B, Test-C). For Test-C, the noise variance is unknown and is estimated as described earlier. The number of sources ($S$) received are indicated in the sub-caption. The following observations can be made from Fig. \ref{fig:Fig01} and Tables-(\ref{Ta1}-\ref{Ta5}).
\begin{enumerate}
\item Fig. \ref{fig:Fig01} shows that $\pr(B)\to1$ for SNR $>10,15,20,20$ dB for one, two, three and four source scenarios. 
\item None of the proposed tests provide rate control (\emph{i.e.}, $\hat{P}_c<P_{c}$) for SNR$<15$ dB for single source, SNR$<20$ dB for two source and three source, and SNR$<25$dB for four source scenarios respectively. The reason for this behaviour is that $P(B)\neq1$ in these scenarios, so the tests fail when SNR is low.
\item All the finite sample tests ($A_k$, $B_k$ and $T_{k}$ (Finite)) give perfect rate control ($\hat{P}_c=P_{c}$) independent of SNR whenever SNR$\geq15$, SNR$\geq20$, SNR$\geq20$ dB and SNR$\geq25$ dB for single, two, three and four source scenarios respectively.
\item The asymptotic covariance test ($T_k$ (Asymp)) does not give rate control (\emph{i.e.}, $\hat{P}_c<P_{c}$) even at high SNRs.
\item Test-C ($C_k$) also provides rate control for high SNRs for all the scenarios.
\item $P_f\leq 0.01$ for exact covariance test, Test-A and Test-B at moderate to high SNRs.
\item The Algorithm also outputs the correct DoA(s) for $99\%$ ($P_c$) of the trials at moderate to high SNRs. 
\end{enumerate}
From the observations, we can conclude that $\pr(B)\to1$ at moderate to high SNRs for orthogonal models, thus verifying Lemma-\ref{lem1}. We also conclude that the proposed finite sample tests maintain rate control ($\hat{P}_c= P_c$) at moderate to high SNRs, where $\pr(B)\to1$. Whenever the source is detected correctly, the estimation error is zero because of perfect grid matching. We note that the evaluation of threshold ($\eta$) for the finite sample covariance test requires numerical integration, which makes it the most complex test. But, there is no gain in-terms of rate control compared to the other finite sample tests.  We also note that although the tests have been performed for $P_c = 0.99$, the rate control for higher values of $P_c$ was also observed and upto $7$ sources could be detected. 
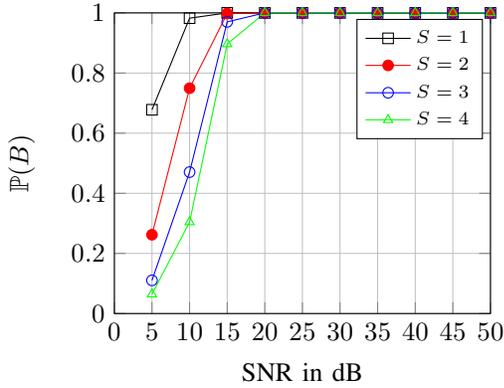
\begin{figure}[h]
\centering
\begin{tikzpicture}
\draw[help lines] (0,0);
\begin{axis}[%
width=5cm,
height=4cm,
scale only axis,
xmin=0,
xmax=50,
xtick distance={5},
grid=both,
grid style={line width=.1pt, draw=gray!10},
major grid style={line width=.2pt,draw=gray!50},
xlabel={SNR in dB},
ymin=0,
ymax=1,
ylabel={$\pr(B)$},
]


\addplot [
color=black,
solid,
mark=square,
mark options={solid}
]
table[row sep=crcr]{
   5.0000    0.6778\\
   10.0000   0.9824\\
   15.0000   1.0000\\
   20.0000   1.0000\\
   25.0000   1.0000\\
   30.0000   1.0000\\
   35.0000   1.0000\\
   40.0000   1.0000\\
   45.0000   1.0000\\
   50.0000   1.0000\\
};
\addlegendentry{\scriptsize{$S=1$}};
\addplot [
color=red,
solid,
mark=*,
mark options={solid}
]
table[row sep=crcr]{
   5.0000     0.2621\\
   10.0000    0.7496\\
   15.0000    0.9979\\
   20.0000    1.0000\\
   25.0000    1.0000\\
   30.0000    1.0000\\
   35.0000    1.0000\\
   40.0000    1.0000\\
   45.0000    1.0000\\
   50.0000    1.0000\\
};
\addlegendentry{\scriptsize{$S=2$}};
\addplot [
color=blue,
solid,
mark=o,
mark options={solid}
]
table[row sep=crcr]{
   5.0000     0.1104\\
   10.0000    0.4705\\
   15.0000    0.9701\\
   20.0000    1.0000\\
   25.0000    1.0000\\
   30.0000    1.0000\\
   35.0000    1.0000\\
   40.0000    1.0000\\
   45.0000    1.0000\\
   50.0000    1.0000\\
};
\addlegendentry{\scriptsize{$S=3$}};
\addplot [
color=green,
solid,
mark=triangle,
mark options={solid}
]
table[row sep=crcr]{
   5.0000     0.0644\\
   10.0000    0.3042\\
   15.0000    0.8964\\
   20.0000    1.0000\\
   25.0000    1.0000\\
   30.0000    1.0000\\
   35.0000    1.0000\\
   40.0000    1.0000\\
   45.0000    1.0000\\
   50.0000    1.0000\\
};
\addlegendentry{\scriptsize{$S=4$}};
\end{axis}
\end{tikzpicture}%
\caption{$\pr(B)$ vs SNR for orthogonal model}
\label{fig:Fig01}
\end{figure}
\subsection{Non-Orthogonal Models}
In the simulations for the single source scenario, the source impinges the ULA from the angle $\boldsymbol{\rho}(9)$. Similarly, sources impinge ULA from angles $\boldsymbol{\rho}(7,10)$ for two source scenario. For three and four source scenarios the sources impinge the ULA from $\boldsymbol{\rho}(6,9,12)$ and $\boldsymbol{\rho}(5,8,11,14)$ angles respectively. Fig. \ref{fig:Fig02} shows the plot of $\pr(B)$ vs SNR.  Table-\ref{Ta6} shows the $\hat{P}_c$ obtained by Algorithm-\ref{Algo2} based on Test-D. The number of sources ($S$) received are indicated in the sub-caption. The following observations are made from Fig. \ref{fig:Fig02} and Table-\ref{Ta6},
\begin{enumerate}
\item Fig. \ref{fig:Fig02} shows that $\pr(B)\to1$ for SNR $>15,20,25,35$ dB for one, two, three and four source scenarios. 
\item Test-D does not provide rate control (\emph{i.e.}, $\hat{P}_c<P_{c}$) for SNR$<15$ dB for single source, SNR$<22$ dB for two source, SNR$<25$ dB for three source, and SNR$<35$dB for four source scenarios respectively. The reason for this behaviour is that $P(B)\neq1$ in these scenarios, so the tests fail when SNR is low and sources are not well separated.
\item The test gives perfect rate control ($\hat{P}_c=P_{c}$) independent of SNR whenever SNR$\geq15$, SNR$\geq22$, SNR$\geq25$ dB and SNR$\geq35$ dB for single, two, three and four source scenarios respectively.
\item Test-D maintains $P_f\leq 0.01$ at moderate to high SNRs.
\item The Algorithm also outputs the correct DoA(s) for $99\%$ ($P_c$) of the trials at moderate to high SNRs. 
\end{enumerate}
From the observations, we can conclude that $\pr(B)\to1$ for well separated sources at moderate to high SNRs . We also conclude that Test-D maintains rate control ($\hat{P}_c= P_c$) at moderate to high SNRs for well separated sources, where $\pr(B)\to1$. Whenever the source is detected correctly, the estimation error is zero because of perfect grid matching. We note that although the tests have been performed for $P_c = 0.99$, the rate control for higher $P_c$ was also observed.
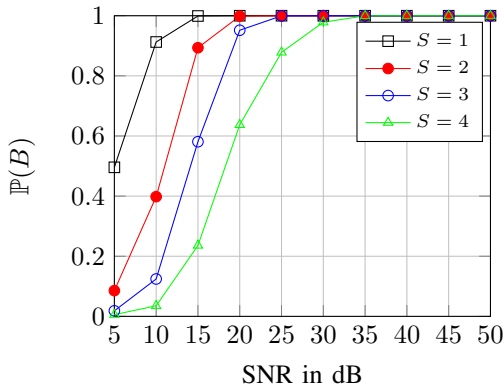
\begin{figure}[h]
\centering
\begin{tikzpicture}
\draw[help lines] (0,0);
\begin{axis}[%
width=5cm,
height=4cm,
scale only axis,
xmin=5,
xmax=50,
xtick distance={5},
grid=both,
grid style={line width=.1pt, draw=gray!10},
major grid style={line width=.2pt,draw=gray!50},
xlabel={SNR in dB},
ymin=0,
ymax=1,
ylabel={$\pr(B)$},
]


\addplot [
color=black,
solid,
mark=square,
mark options={solid}
]
table[row sep=crcr]{
   5.0000     0.4958\\
   10.0000    0.9126\\
   15.0000    0.9991\\
   20.0000    1.0000\\
   25.0000    1.0000\\
   30.0000    1.0000\\
   35.0000    1.0000\\
   40.0000    1.0000\\
   45.0000    1.0000\\
   50.0000    1.0000\\
};
\addlegendentry{\scriptsize{$S=1$}};
\addplot [
color=red,
solid,
mark=*,
mark options={solid}
]
table[row sep=crcr]{
    5.0000    0.0856\\
   10.0000    0.3982\\
   15.0000    0.8935\\
   20.0000    0.9987\\
   25.0000    1.0000\\
   30.0000    1.0000\\
   35.0000    1.0000\\
   40.0000    1.0000\\
   45.0000    1.0000\\
   50.0000    1.0000\\
};
\addlegendentry{\scriptsize{$S=2$}};
\addplot [
color=blue,
solid,
mark=o,
mark options={solid}
]
table[row sep=crcr]{
    5.0000    0.0184\\
   10.0000    0.1244\\
   15.0000    0.5809\\
   20.0000    0.9517\\
   25.0000    0.9996\\
   30.0000    1.0000\\
   35.0000    1.0000\\
   40.0000    1.0000\\
   45.0000    1.0000\\
   50.0000    1.0000\\
};
\addlegendentry{\scriptsize{$S=3$}};
\addplot [
color=green,
solid,
mark=triangle,
mark options={solid}
]
table[row sep=crcr]{
 	5.0000    0.0057\\  
   10.0000    0.0352\\
   15.0000    0.2359\\
   20.0000    0.6371\\
   25.0000    0.8782\\
   30.0000    0.9784\\
   35.0000    0.9997\\
   40.0000    1.0000\\
   45.0000    1.0000\\
   50.0000    1.0000\\
};
\addlegendentry{\scriptsize{$S=4$}};
\end{axis}
\end{tikzpicture}%
\caption{$\pr(B)$ vs SNR for non-orthogonal model}
\label{fig:Fig02}
\end{figure}
\subsection{Grid Matching}
In the simulations for the single source scenario, the source impinges the ULA from the angle $\boldsymbol{\rho}(5)$. Similarly, sources impinge ULA from angles $\boldsymbol{\rho}(3,6)$ for two source scenario. For three and four source scenarios the sources impinge the ULA from $\boldsymbol{\rho}(2,4,6)$ and $\boldsymbol{\rho}(1,3,5,7)$ angles respectively. For $\pr(B)$ simulations, the offset error was maintained at $p_{i} = 0.24r$ for all the sources, where $r$ is the resolution of the grid. Fig. \ref{fig:Fig03} shows the plot of $\pr(B)$ vs SNR. We note that $\pr(B)$ depends on the grid mismatch error $p_i$. For four source scenario and $p_i$ to $0.24r$ for all the sources, $\pr(B)$ is always zero irrespective of SNR (green curve with diamond marker). Reducing $p_i = 0.1r$ for $S=4$ sources, we find that $\pr(B)\to1$ for high SNRs. Table-\ref{Ta7} shows the $\hat{P}_c$ obtained by Algorithm-\ref{Algo3} based on Test-E. The number of sources ($S$) received are indicated in the sub-caption. Fig. \ref{fig:Fig04} shows the square root of the Cramer-Rao bound (SCRB) vs SNR for the model for $S$ sources. For $S=3$ sources, the  root of the mean-square error (RMSE) vs SNR is also plotted with $p_i = 0.24r$ for all sources. The average for the RMSE plot is taken over correctly detected sources which account for $P_c$. Hence, the RMSE is not calculated for SNR $<30$dB where $P_c$ is small. The following observations can be made from Fig. \ref{fig:Fig03}, Fig. \ref{fig:Fig04} and Table-\ref{Ta7}.
\begin{enumerate}
\item Fig. \ref{fig:Fig03} shows that $\pr(B)\to1$ for SNR $>20,35,35$ dB for one, two and three source scenarios. 
\item Test-E does not provide rate control (\emph{i.e.}, $\hat{P}_c<P_{c}$) for SNR$<20$ dB for single source, SNR$<40$ dB for two source and SNR$<40$ dB for three source scenarios respectively. The reason for this behavior is that $P(B)\neq1$ in these scenarios, so the tests fail when SNR is low and sources are not well separated. 
\item The test gives perfect rate control ($\hat{P}_c=P_{c}$) independent of SNR whenever SNR$\geq25$, SNR$\geq40$ and SNR$\geq40$ dB for single, two and three source scenarios.
\item Test-E maintains $P_f\leq 0.01$ at moderate to high SNRs.
\item The Algorithm also outputs the correct DoA(s) for $99\%$ ($P_c$) of the trials at high SNRs.  
\item Fig. \ref{fig:Fig04} shows that the RMSE for $S=3$ is close to the SCRB for SNR$\geq 30$dB.
\end{enumerate}
From the observations, we can conclude that $\pr(B)\to1$ for well separated sources at moderate to high SNRs. We also conclude that Test-E maintains rate control ($\hat{P}_c= P_c$) at high SNRs for well separated sources, where $\pr(B)\to1$. Whenever the source(s) are detected correctly, the estimation error and is close to the SCRB for the grid mismatch model. We note that although the tests have been performed for $P_c = 0.99$, the rate control for higher $P_c$ was also observed.
\begin{figure}
\centering
\begin{tikzpicture}
\draw[help lines] (0,0);
\begin{axis}[%
width=5cm,
height=4cm,
scale only axis,
xmin=5,
xmax=50,
xtick distance={5},
grid=both,
grid style={line width=.1pt, draw=gray!10},
major grid style={line width=.2pt,draw=gray!50},
xlabel={SNR in dB},
ymin=0,
ymax=1,
ylabel={$\pr(B)$},
]


\addplot [
color=black,
solid,
mark=square,
mark options={solid}
]
table[row sep=crcr]{
    5.0000    0.5357\\
   10.0000    0.8957\\
   15.0000    0.9974\\
   20.0000    1.0000\\
   25.0000    1.0000\\
   30.0000    1.0000\\
   35.0000    1.0000\\
   40.0000    1.0000\\
   45.0000    1.0000\\
   50.0000    1.0000\\
};
\addlegendentry{\scriptsize{$S=1$}};
\addplot [
color=red,
solid,
mark=*,
mark options={solid}
]
table[row sep=crcr]{
    5.0000    0.0357\\
   10.0000    0.1326\\
   15.0000    0.3814\\
   20.0000    0.6355\\
   25.0000    0.7756\\
   30.0000    0.8623\\
   35.0000    0.9617\\
   40.0000    0.9989\\
   45.0000    1.0000\\
   50.0000    1.0000\\
};
\addlegendentry{\scriptsize{$S=2$}};
\addplot [
color=blue,
solid,
mark=o,
mark options={solid}
]
table[row sep=crcr]{
    5.0000    0.0222\\
   10.0000    0.0886\\
   15.0000    0.3021\\
   20.0000    0.5534\\
   25.0000    0.7173\\
   30.0000    0.8644\\
   35.0000    0.9706\\
   40.0000    0.9995\\
   45.0000    1.0000\\
   50.0000    1.0000\\
};
\addlegendentry{\scriptsize{$S=3$}};
\addplot [
color=green,
solid,
mark=triangle,
mark options={solid}
]
table[row sep=crcr]{
    5.0000         0\\
   10.0000    0.0001\\
   15.0000    0.0011\\
   20.0000    0.0011\\
   25.0000         0\\
   30.0000         0\\
   35.0000         0\\
   40.0000         0\\
   45.0000         0\\
   50.0000         0\\
};
\addlegendentry{\scriptsize{$S=4$}};
\addplot [
color=brown,
solid,
mark=diamond,
mark options={solid}
]
table[row sep=crcr]{
    5.0000    0.0380\\
   10.0000    0.1351\\
   15.0000    0.4159\\
   20.0000    0.6777\\
   25.0000    0.7742\\
   30.0000    0.8687\\
   35.0000    0.9747\\
   40.0000    0.9999\\
   45.0000    1.0000\\
   50.0000    1.0000\\
};
\addlegendentry{\scriptsize{$S=4$}};
\end{axis}
\end{tikzpicture}%
\caption{$\pr(B)$ for grid matching model}
\label{fig:Fig03}
\end{figure}
\begin{figure}
\centering
\begin{tikzpicture}
\draw[help lines] (0,0);
\begin{axis}[%
width=5cm,
height=5cm,
scale only axis,
xmin=5,
xmax=50,
xtick distance={5},
grid=both,
grid style={line width=.1pt, draw=gray!10},
major grid style={line width=.2pt,draw=gray!50},
xlabel={SNR in dB},
ymin=0,
ymax=0.05,
ylabel={SBCRB},
]


\addplot [
color=black,
solid,
mark=square,
mark options={solid}
]
table[row sep=crcr]{
    5.0000    0.0251\\
   10.0000    0.0141\\
   15.0000    0.0079\\
   20.0000    0.0045\\
   25.0000    0.0025\\
   30.0000    0.0014\\
   35.0000    0.0008\\
   40.0000    0.0004\\
   45.0000    0.0003\\
   50.0000    0.0001\\
};
\addlegendentry{\scriptsize{$S=1$}};
\addplot [
color=red,
solid,
mark=*,
mark options={solid}
]
table[row sep=crcr]{
    5.0000    0.0366\\
   10.0000    0.0274\\
   15.0000    0.0206\\
   20.0000    0.0154\\
   25.0000    0.0116\\
   30.0000    0.0087\\
   35.0000    0.0065\\
   40.0000    0.0049\\
   45.0000    0.0037\\
   50.0000    0.0027\\
};
\addlegendentry{\scriptsize{$S=2$}};
\addplot [
color=blue,
solid,
mark=o,
mark options={solid}
]
table[row sep=crcr]{
    5.0000    0.0439\\
   10.0000    0.0362\\
   15.0000    0.0299\\
   20.0000    0.0247\\
   25.0000    0.0204\\
   30.0000    0.0168\\
   35.0000    0.0139\\
   40.0000    0.0115\\
   45.0000    0.0095\\
   50.0000    0.0078\\
};
\addlegendentry{\scriptsize{$S=3$}};
\addplot [
color=brown,
solid,
mark=diamond,
mark options={solid}
]
table[row sep=crcr]{
   30.0000    0.0274\\
   35.0000    0.0234\\
   40.0000    0.0198\\
   45.0000    0.0178\\
   50.0000    0.0165\\
};
\addlegendentry{\scriptsize{$S=3$},$\sqrt{\text{MSE}}$};
\addplot [
color=green,
solid,
mark=triangle,
mark options={solid}
]
table[row sep=crcr]{
    5.0000    0.0457\\
   10.0000    0.0395\\
   15.0000    0.0342\\
   20.0000    0.0297\\
   25.0000    0.0257\\
   30.0000    0.0222\\
   35.0000    0.0193\\
   40.0000    0.0167\\
   45.0000    0.0144\\
   50.0000    0.0125\\
};
\addlegendentry{\scriptsize{$S=4$}};
\end{axis}
\end{tikzpicture}%
\caption{SBCRB vs SNR for grid matching model}
\label{fig:Fig04}
\end{figure}
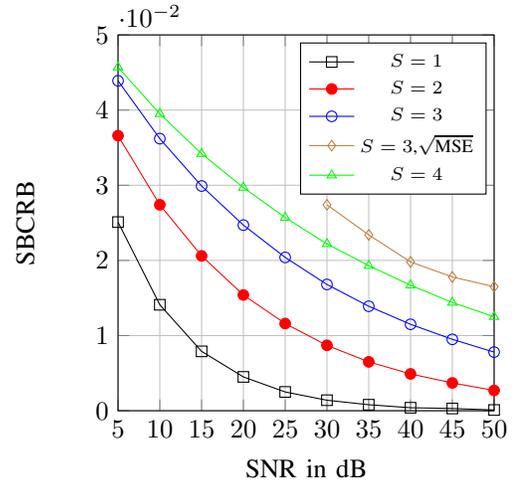
\begin{table*}
\begin{subtable}{0.24\textwidth}
\begin{tabular}{|c|c|c|c|c|}
    \hline
    SNR     & $\hat{P}_c$ & $\hat{P}_f$\\
    \hline
    $5$ dB  & 0.2131 & 0.0179\\
    \hline
    $10$ dB & 0.8220 & 0.0290\\
    \hline
    $15$ dB & 0.9683 & 0.0317\\
    \hline
    $20$ dB & 0.9693 & 0.0307\\
    \hline
    $25$ dB & 0.9694 & 0.0306\\
    \hline
    $30$ dB & 0.9693 & 0.0307\\
    \hline
    $35$ dB & 0.9701 & 0.0299\\
    \hline
    $40$ dB & 0.9696 & 0.0304\\
    \hline
    $45$ dB & 0.9695 & 0.0305\\
    \hline
    $50$ dB & 0.9695 & 0.0305\\
    \hline
\end{tabular}
\subcaption{$S=1$}
\end{subtable}
\begin{subtable}{0.24\textwidth}
\begin{tabular}{|c|c|c|c|c|}
    \hline
    SNR     & $\hat{P}_c$ & $\hat{P}_f$\\
    \hline
    $5$ dB  & 0.2131 & 0.0179\\
    \hline
    $10$ dB & 0.8220 & 0.0290\\
    \hline
    $15$ dB & 0.9683 & 0.0317\\
    \hline
    $20$ dB & 0.9693 & 0.0307\\
    \hline
    $25$ dB & 0.9694 & 0.0306\\
    \hline
    $30$ dB & 0.9693 & 0.0307\\
    \hline
    $35$ dB & 0.9701 & 0.0299\\
    \hline
    $40$ dB & 0.9696 & 0.0304\\
    \hline
    $45$ dB & 0.9695 & 0.0305\\
    \hline
    $50$ dB & 0.9695 & 0.0305\\
    \hline
\end{tabular}
\subcaption{$S=2$}
\end{subtable}
\begin{subtable}{0.24\textwidth}
\begin{tabular}{|c|c|c|c|c|}
    \hline
    SNR     & $\hat{P}_c$ & $\hat{P}_f$\\
    \hline
    $5$ dB  & 0.0003 & 0.0158\\
    \hline
    $10$ dB & 0.0309 & 0.0815\\
    \hline
    $15$ dB & 0.6992 & 0.1759\\
    \hline
    $20$ dB & 0.9658 & 0.0342\\
    \hline
    $25$ dB & 0.9656 & 0.0344\\
    \hline
    $30$ dB & 0.9663 & 0.0337\\
    \hline
    $35$ dB & 0.9659 & 0.0340\\
    \hline
    $40$ dB & 0.9658 & 0.0342\\
    \hline
    $45$ dB & 0.9658 & 0.0342\\
    \hline
    $50$ dB &  0.9660 & 0.0340\\
    \hline
\end{tabular}
\subcaption{$S=3$}
\end{subtable}
\begin{subtable}{0.24\textwidth}
\begin{tabular}{|c|c|c|c|c|}
    \hline
    SNR     & $\hat{P}_c$ & $\hat{P}_f$\\
    \hline
    $5$ dB  & 0 & 0.0077\\
    \hline
    $10$ dB & 0.0042& 0.0041\\
    \hline
    $15$ dB & 0.3821 & 0.0233\\
    \hline
    $20$ dB & 0.9616 & 0.0379\\
    \hline
    $25$ dB & 0.9631 & 0.0369\\
    \hline
    $30$ dB & 0.9624 & 0.0376\\
    \hline
    $35$ dB & 0.9632 & 0.0368\\
    \hline
    $40$ dB & 0.9620 & 0.0380\\
    \hline
    $45$ dB & 0.9620 & 0.0380\\
    \hline
    $50$ dB & 0.9621 & 0.0379\\
    \hline
\end{tabular}
\subcaption{$S=4$}
\end{subtable}
\caption{Asymptotic Covariance Test}
\label{Ta1}
\end{table*}
\begin{table*}
\begin{subtable}{0.24\textwidth}
\begin{tabular}{|c|c|c|c|c|}
    \hline
    SNR     & $\hat{P}_c$ & $\hat{P}_f$\\
    \hline
    $5$ dB  & 0.1382 & 0.0046\\
    \hline
    $10$ dB & 0.7557 & 0.0085\\
    \hline
    $15$ dB & 0.9903 & 0.0097\\
    \hline
    $20$ dB & 0.9907 & 0.0093\\
    \hline
    $25$ dB & 0.9902 & 0.0098\\
    \hline
    $30$ dB & 0.9907 & 0.0093\\
    \hline
    $35$ dB & 0.9902 & 0.0098\\
    \hline
    $40$ dB & 0.9901 & 0.0099\\
    \hline
    $45$ dB & 0.9906 & 0.0094\\
    \hline
    $50$ dB & 0.9901 & 0.0099\\
    \hline
  \end{tabular}
\subcaption{$S = 1$}
\end{subtable}
\begin{subtable}{0.24\textwidth}
\begin{tabular}{|c|c|c|c|c|}
    \hline
    SNR     & $\hat{P}_c$ & $\hat{P}_f$\\
    \hline
    $5$ dB  & 0.0022 & 0.0026\\
    \hline
    $10$ dB & 0.1164 & 0.0025\\
    \hline
    $15$ dB & 0.9233 & 0.0089\\
    \hline
    $20$ dB & 0.9909 & 0.0091\\
    \hline
    $25$ dB & 0.9906 & 0.0094\\
    \hline
    $30$ dB & 0.9902 & 0.0098\\
    \hline
    $35$ dB & 0.9904 & 0.0096\\
    \hline
    $40$ dB & 0.9911 & 0.0089\\
    \hline
    $45$ dB & 0.9907 & 0.0093\\
    \hline
    $50$ dB & 0.9908 & 0.0092\\
    \hline
\end{tabular}
\subcaption{$S = 2$}
\end{subtable}
\begin{subtable}{0.24\textwidth}
\begin{tabular}{|c|c|c|c|c|}
    \hline
    SNR     & $\hat{P}_c$ & $\hat{P}_f$\\
    \hline
    $5$ dB  & 0 & 0.0028\\
    \hline
    $10$ dB & 0.0089 & 0.0012\\
    \hline
    $15$ dB & 0.5525 & 0.0074\\
    \hline
    $20$ dB & 0.9899 & 0.0100\\
    \hline
    $25$ dB & 0.9898 & 0.0103\\
    \hline
    $30$ dB & 0.9900 & 0.0100\\
    \hline
    $35$ dB & 0.9902 & 0.0098\\
    \hline
    $40$ dB & 0.9902 & 0.0098\\
    \hline
    $45$ dB & 0.9905 & 0.0095\\
    \hline
    $50$ dB & 0.9904 & 0.0096\\
    \hline
\end{tabular}
\subcaption{$S = 3$}
\end{subtable}
\begin{subtable}{0.24\textwidth}
\begin{tabular}{|c|c|c|c|c|}
    \hline
    SNR     & $\hat{P}_c$ & $\hat{P}_f$\\
    \hline
    $5$ dB  & 0      & 0.0022\\
    \hline
    $10$ dB & 0      & 0\\
    \hline
    $15$ dB & 0.2118 & 0.0043\\
    \hline
    $20$ dB & 0.9868 & 0.0107\\
    \hline
    $25$ dB & 0.9891 & 0.0109\\
    \hline
    $30$ dB & 0.9890 & 0.0110\\
    \hline
    $35$ dB & 0.9897 & 0.0103\\
    \hline
    $40$ dB & 0.9892 & 0.0108\\
    \hline
    $45$ dB & 0.9891 & 0.0109\\
    \hline
    $50$ dB & 0.9898 & 0.0102\\
    \hline
\end{tabular}
\subcaption{$S = 4$}
\end{subtable}
\caption{Exact Covariance Test}
\end{table*}
\begin{table*}
\begin{subtable}{0.24\textwidth}
\begin{tabular}{|c|c|c|c|c|}
    \hline
    SNR     & $\hat{P}_c$ & $\hat{P}_f$\\
    \hline
    $5$ dB  & 0.1629 & 0.0092\\
    \hline
    $10$ dB & 0.8168 & 0.0097\\
    \hline
    $15$ dB & 0.9901 & 0.0099\\
    \hline
    $20$ dB & 0.9906 & 0.0094\\
    \hline
    $25$ dB & 0.9903 & 0.0097\\
    \hline
    $30$ dB & 0.9897 & 0.0103\\
    \hline
    $35$ dB & 0.9903 & 0.0097\\
    \hline
    $40$ dB & 0.9898 & 0.0102\\
    \hline
    $45$ dB & 0.9901 & 0.0098\\
    \hline
    $50$ dB & 0.9903 & 0.0097\\
    \hline
  \end{tabular}
\subcaption{$S = 1$}
\end{subtable}
\begin{subtable}{0.24\textwidth}
\begin{tabular}{|c|c|c|c|c|}
    \hline
    SNR     & $\hat{P}_c$ & $\hat{P}_f$\\
    \hline
    $5$ dB  & 0.0025 & 0.0074\\
    \hline
    $10$ dB & 0.1424 & 0.0065\\
    \hline
    $15$ dB & 0.9564 & 0.0097\\
    \hline
    $20$ dB & 0.9896 & 0.0104\\
    \hline
    $25$ dB & 0.9902 & 0.0098\\
    \hline
    $30$ dB & 0.9905 & 0.0095\\
    \hline
    $35$ dB & 0.9905 & 0.0095\\
    \hline
    $40$ dB & 0.9901 & 0.0099\\
    \hline
    $45$ dB & 0.9904 & 0.0096\\
    \hline
    $50$ dB & 0.9895 & 0.0106\\
    \hline
\end{tabular}
\subcaption{$S = 2$}
\end{subtable}
\begin{subtable}{0.24\textwidth}
\begin{tabular}{|c|c|c|c|c|}
    \hline
    SNR     & $\hat{P}_c$ & $\hat{P}_f$\\
    \hline
    $5$ dB  & 0 & 0.0061\\
    \hline
    $10$ dB & 0.0083 & 0.0068\\
    \hline
    $15$ dB & 0.6575 & 0.0097\\
    \hline
    $20$ dB & 0.9896 & 0.0104\\
    \hline
    $25$ dB & 0.9904 & 0.0096\\
    \hline
    $30$ dB & 0.9895 & 0.0105\\
    \hline
    $35$ dB & 0.9897 & 0.0103\\
    \hline
    $40$ dB & 0.9908 & 0.0092\\
    \hline
    $45$ dB & 0.9902 & 0.0098\\
    \hline
    $50$ dB & 0.9896 & 0.0104\\
    \hline
\end{tabular}
\subcaption{$S = 3$}
\end{subtable}
\begin{subtable}{0.24\textwidth}
\begin{tabular}{|c|c|c|c|c|}
    \hline
    SNR     & $\hat{P}_c$ & $\hat{P}_f$\\
    \hline
    $5$ dB  & 0      & 0.0051\\
    \hline
    $10$ dB & 0      & 0.0053\\
    \hline
    $15$ dB & 0.2669 & 0.0075\\
    \hline
    $20$ dB & 0.9892 & 0.0101\\
    \hline
    $25$ dB & 0.9900 & 0.0100\\
    \hline
    $30$ dB & 0.9896 & 0.0104\\
    \hline
    $35$ dB & 0.9895 & 0.0105\\
    \hline
    $40$ dB & 0.9901 & 0.0098\\
    \hline
    $45$ dB & 0.9905 & 0.0095\\
    \hline
    $50$ dB & 0.9896 & 0.0104\\
    \hline
\end{tabular}
\subcaption{$S = 4$}
\end{subtable}
\caption{Test-A}
\end{table*}
\begin{table*}
\begin{subtable}{0.24\textwidth}
\begin{tabular}{|c|c|c|c|c|}
    \hline
    SNR     & $\hat{P}_c$ & $\hat{P}_f$\\
    \hline
    $5$ dB  & 0.1475 & 0.0050\\
    \hline
    $10$ dB & 0.7775 & 0.0089\\
    \hline
    $15$ dB & 0.9900 & 0.0100\\
    \hline
    $20$ dB & 0.9896 & 0.0104\\
    \hline
    $25$ dB & 0.9902 & 0.0098\\
    \hline
    $30$ dB & 0.9897 & 0.0103\\
    \hline
    $35$ dB & 0.9901 & 0.0099\\
    \hline
    $40$ dB & 0.9900 & 0.0100\\
    \hline
    $45$ dB & 0.9900 & 0.0100\\
    \hline
    $50$ dB & 0.9901 & 0.0099\\
    \hline
  \end{tabular}
\subcaption{$S = 1$}
\end{subtable}
\begin{subtable}{0.24\textwidth}
\begin{tabular}{|c|c|c|c|c|}
    \hline
    SNR     & $\hat{P}_c$ & $\hat{P}_f$\\
    \hline
    $5$ dB  & 0.0027 & 0.0039\\
    \hline
    $10$ dB & 0.1249 & 0.0030\\
    \hline
    $15$ dB & 0.9364 & 0.0094\\
    \hline
    $20$ dB & 0.9904 & 0.0094\\
    \hline
    $25$ dB & 0.9899 & 0.0101\\
    \hline
    $30$ dB & 0.9899 & 0.0101\\
    \hline
    $35$ dB & 0.9898 & 0.0102\\
    \hline
    $40$ dB & 0.9902 & 0.0098\\
    \hline
    $45$ dB & 0.9900 & 0.0100\\
    \hline
    $50$ dB & 0.9901 & 0.0099\\
    \hline
\end{tabular}
\subcaption{$S = 2$}
\end{subtable}
\begin{subtable}{0.24\textwidth}
\begin{tabular}{|c|c|c|c|c|}
    \hline
    SNR     & $\hat{P}_c$ & $\hat{P}_f$\\
    \hline
    $5$ dB  & 0      & 0.0033\\
    \hline
    $10$ dB & 0.0083 & 0.0014\\
    \hline
    $15$ dB & 0.5975 & 0.0074\\
    \hline
    $20$ dB & 0.9900 & 0.0100\\
    \hline
    $25$ dB & 0.9900 & 0.0100\\
    \hline
    $30$ dB & 0.9897 & 0.0103\\
    \hline
    $35$ dB & 0.9901 & 0.0099\\
    \hline
    $40$ dB & 0.9901 & 0.0098\\
    \hline
    $45$ dB & 0.9897 & 0.0103\\
    \hline
    $50$ dB & 0.9895 & 0.0105\\
    \hline
\end{tabular}
\subcaption{$S = 3$}
\end{subtable}
\begin{subtable}{0.24\textwidth}
\begin{tabular}{|c|c|c|c|c|}
    \hline
    SNR     & $\hat{P}_c$ & $\hat{P}_f$\\
    \hline
    $5$ dB  & 0      & 0\\
    \hline
    $10$ dB & 0      & 0\\
    \hline
    $15$ dB & 0.0707 & 0.0002\\
    \hline
    $20$ dB & 0.9905 & 0.0016\\
    \hline
    $25$ dB & 0.9900 & 0.0100\\
    \hline
    $30$ dB & 0.9897 & 0.0103\\
    \hline
    $35$ dB & 0.9901 & 0.0099\\
    \hline
    $40$ dB & 0.9901 & 0.0098\\
    \hline
    $45$ dB & 0.9897 & 0.0103\\
    \hline
    $50$ dB & 0.9895 & 0.0105\\
    \hline
\end{tabular}
\subcaption{$S = 4$}
\end{subtable}
\caption{Test-B}
\end{table*}
\begin{table*}
\begin{subtable}{0.24\textwidth}
\begin{tabular}{|c|c|c|c|c|}
    \hline
    SNR     & $\hat{P}_c$ & $\hat{P}_f$\\
    \hline
    $5$ dB  & 0.0254 & 0.0103\\
    \hline
    $10$ dB & 0.1345 & 0.0098\\
    \hline
    $15$ dB & 0.5467 & 0.0100\\
    \hline
    $20$ dB & 0.9653 & 0.0110\\
    \hline
    $25$ dB & 0.9896 & 0.0104\\
    \hline
    $30$ dB & 0.9897 & 0.0103\\
    \hline
    $35$ dB & 0.9901 & 0.0099\\
    \hline
    $40$ dB & 0.9899 & 0.0101\\
    \hline
    $45$ dB & 0.9905 & 0.0095\\
    \hline
    $50$ dB & 0.9896 & 0.0104\\
    \hline
  \end{tabular}
\subcaption{$S = 1$}
\end{subtable}
\begin{subtable}{0.24\textwidth}
\begin{tabular}{|c|c|c|c|c|}
    \hline
    SNR     & $\hat{P}_c$ & $\hat{P}_f$\\
    \hline
    $5$ dB  & 0.0024 & 0.0075\\
    \hline
    $10$ dB & 0.0212 & 0.0071\\
    \hline
    $15$ dB & 0.1660 & 0.0071\\
    \hline
    $20$ dB & 0.6900 & 0.0084\\
    \hline
    $25$ dB & 0.9886 & 0.0078\\
    \hline
    $30$ dB & 0.9920 & 0.0080\\
    \hline
    $35$ dB & 0.9928 & 0.0072\\
    \hline
    $40$ dB & 0.9923 & 0.0077\\
    \hline
    $45$ dB & 0.9920 & 0.0080\\
    \hline
    $50$ dB & 0.9926 & 0.0074\\
    \hline
\end{tabular}
\subcaption{$S = 2$}
\end{subtable}
\begin{subtable}{0.24\textwidth}
\begin{tabular}{|c|c|c|c|c|}
    \hline
    SNR     & $\hat{P}_c$ & $\hat{P}_f$\\
    \hline
    $5$ dB  & 0      & 0.0196\\
    \hline
    $10$ dB & 0.0085 & 0.0299\\
    \hline
    $15$ dB & 0.0956 & 0.0873\\
    \hline
    $20$ dB & 0.5469 & 0.1236\\
    \hline
    $25$ dB & 0.9754 & 0.0176\\
    \hline
    $30$ dB & 0.9900 & 0.0100\\
    \hline
    $35$ dB & 0.9897 & 0.0103\\
    \hline
    $40$ dB & 0.9898 & 0.0102\\
    \hline
    $45$ dB & 0.9899 & 0.0101\\
    \hline
    $50$ dB & 0.9900 & 0.0100\\
    \hline
\end{tabular}
\subcaption{$S = 3$}
\end{subtable}
\begin{subtable}{0.24\textwidth}
\begin{tabular}{|c|c|c|c|c|}
    \hline
    SNR     & $\hat{P}_c$ & $\hat{P}_f$\\
    \hline
    $5$ dB  & 0      & 0.0248\\
    \hline
    $10$ dB & 0.0039 & 0.0120\\
    \hline
    $15$ dB & 0.0544 & 0.0092\\
    \hline
    $20$ dB & 0.4096 & 0.0096\\
    \hline
    $25$ dB & 0.9503 & 0.0092\\
    \hline
    $30$ dB & 0.9908 & 0.0092\\
    \hline
    $35$ dB & 0.9907 & 0.0094\\
    \hline
    $40$ dB & 0.9911 & 0.0089\\
    \hline
    $45$ dB & 0.9909 & 0.0091\\
    \hline
    $50$ dB & 0.9906 & 0.0094\\
    \hline
\end{tabular}
\subcaption{$S = 4$}
\end{subtable}
\caption{Test-C}
\label{Ta5}
\end{table*}
\begin{table*}
\begin{subtable}{0.24\textwidth}
\begin{tabular}{|c|c|c|c|c|}
    \hline
    SNR     & $\hat{P}_c$ & $\hat{P}_f$\\
    \hline
    $5$ dB  & 0.1337 & 0.0388\\
    \hline
    $10$ dB & 0.7653 & 0.0502\\
    \hline
    $15$ dB & 0.9893 & 0.0107\\
    \hline
    $20$ dB & 0.9894 & 0.0106\\
    \hline
    $25$ dB & 0.9903 & 0.0097\\
    \hline
    $30$ dB & 0.9900 & 0.0100\\
    \hline
    $35$ dB & 0.9900 & 0.0100\\
    \hline
    $40$ dB & 0.9898 & 0.0102\\
    \hline
    $45$ dB & 0.9902 & 0.0098\\
    \hline
    $50$ dB & 0.9901 & 0.0099\\
    \hline
  \end{tabular}
\subcaption{$S = 1$}
\end{subtable}
\begin{subtable}{0.24\textwidth}
\begin{tabular}{|c|c|c|c|c|}
    \hline
    SNR     & $\hat{P}_c$ & $\hat{P}_f$\\
    \hline
    $5$ dB  & 0      & 0.0473\\
    \hline
    $10$ dB & 0.0709 & 0.1228\\
    \hline
    $15$ dB & 0.8419 & 0.0629\\
    \hline
    $20$ dB & 0.9885 & 0.0114\\
    \hline
    $25$ dB & 0.9898 & 0.0102\\
    \hline
    $30$ dB & 0.9899 & 0.0101\\
    \hline
    $35$ dB & 0.9901 & 0.0099\\
    \hline
    $40$ dB & 0.9899 & 0.0101\\
    \hline
    $45$ dB & 0.9893 & 0.0107\\
    \hline
    $50$ dB & 0.9901 & 0.0099\\
    \hline
\end{tabular}
\subcaption{$S = 2$}
\end{subtable}
\begin{subtable}{0.24\textwidth}
\begin{tabular}{|c|c|c|c|c|}
    \hline
    SNR     & $\hat{P}_c$ & $\hat{P}_f$\\
    \hline
    $5$ dB  & 0      & 0.0612\\
    \hline
    $10$ dB & 0.0024 & 0.2130\\
    \hline
    $15$ dB & 0.3574 & 0.5186\\
    \hline
    $20$ dB & 0.9435 & 0.0523\\
    \hline
    $25$ dB & 0.9893 & 0.0107\\
    \hline
    $30$ dB & 0.9901 & 0.0099\\
    \hline
    $35$ dB & 0.9895 & 0.0105\\
    \hline
    $40$ dB & 0.9899 & 0.0101\\
    \hline
    $45$ dB & 0.9893 & 0.0107\\
    \hline
    $50$ dB & 0.9899 & 0.0101\\
    \hline
\end{tabular}
\subcaption{$S = 3$}
\end{subtable}
\begin{subtable}{0.24\textwidth}
\begin{tabular}{|c|c|c|c|c|}
    \hline
    SNR     & $\hat{P}_c$ & $\hat{P}_f$\\
    \hline
    $5$ dB  & 0      & 0.0828\\
    \hline
    $10$ dB & 0      & 0.2317\\
    \hline
    $15$ dB & 0.0602 & 0.5768\\
    \hline
    $20$ dB & 0.6308 & 0.3680\\
    \hline
    $25$ dB & 0.8715 & 0.1285\\
    \hline
    $30$ dB & 0.9701 & 0.0299\\
    \hline
    $35$ dB & 0.9903 & 0.0097\\
    \hline
    $40$ dB & 0.9906 & 0.0094\\
    \hline
    $45$ dB & 0.9904 & 0.0096\\
    \hline
    $50$ dB & 0.9908 & 0.0092\\
    \hline
\end{tabular}
\subcaption{$S = 4$}
\end{subtable}
\caption{Test-D}
\label{Ta6}
\end{table*}
\begin{table*}
\centering
\begin{subtable}{0.24\textwidth}
\begin{tabular}{|c|c|c|c|c|}
    \hline
    SNR     & $\hat{P}_c$ & $\hat{P}_f$\\
    \hline
    $5$ dB  & 0      & 0\\
    \hline
    $10$ dB & 0.0112 & 0.0105\\
    \hline
    $15$ dB & 0.4334 & 0.3841\\
    \hline
    $20$ dB & 0.8773 & 0.0980\\
    \hline
    $25$ dB & 0.9902 & 0.0098\\
    \hline
    $30$ dB & 0.9890 & 0.0101\\
    \hline
    $35$ dB & 0.9905 & 0.0095\\
    \hline
    $40$ dB & 0.9900 & 0.0100\\
    \hline
    $45$ dB & 0.9907 & 0.0093\\
    \hline
    $50$ dB & 0.9908 & 0.0092\\
    \hline
  \end{tabular}
\subcaption{$S = 1$}
\end{subtable}
\begin{subtable}{0.24\textwidth}
\begin{tabular}{|c|c|c|c|c|}
    \hline
    SNR     & $\hat{P}_c$ & $\hat{P}_f$\\
    \hline
    $5$ dB  & 0      & 0.0012\\
    \hline
    $10$ dB & 0      & 0.0340\\
    \hline
    $15$ dB & 0.1365 & 0.4188\\
    \hline
    $20$ dB & 0.6156 & 0.3751\\
    \hline
    $25$ dB & 0.7706 & 0.2294\\
    \hline
    $30$ dB & 0.8634 & 0.1366\\
    \hline
    $35$ dB & 0.9626 & 0.0374\\
    \hline
    $40$ dB & 0.9900 & 0.0100\\
    \hline
    $45$ dB & 0.9899 & 0.0101\\
    \hline
    $50$ dB & 0.9903 & 0.0097\\
    \hline
\end{tabular}
\subcaption{$S = 2$}
\end{subtable}
\begin{subtable}{0.24\textwidth}
\begin{tabular}{|c|c|c|c|c|}
    \hline
    SNR     & $\hat{P}_c$ & $\hat{P}_f$\\
    \hline
    $5$ dB  & 0      & 0.0085\\
    \hline
    $10$ dB & 0      & 0.1071\\
    \hline
    $15$ dB & 0.0694 & 0.5859\\
    \hline
    $20$ dB & 0.5222 & 0.4733\\
    \hline
    $25$ dB & 0.7118 & 0.2878\\
    \hline
    $30$ dB & 0.8711 & 0.1289\\
    \hline
    $35$ dB & 0.9691 & 0.0309\\
    \hline
    $40$ dB & 0.9898 & 0.0102\\
    \hline
    $45$ dB & 0.9902 & 0.0098\\
    \hline
    $50$ dB & 0.9898 & 0.0102\\
    \hline
\end{tabular}
\subcaption{$S = 3$}
\end{subtable}
\caption{Test-E}
\label{Ta7}
\end{table*}
\section{Conclusions}\label{concl}
In this work, we propose tests for joint detection and estimation of multiple targets using single snapshot measurements at moderate to high SNR. These tests can also be interpreted as stopping criterion for homotopy based (group) lasso estimators, since they provide a stopping criteria as the (group) lasso estimator travels the (group) lasso path. The proposed algorithms offer control over the probability of correct detection of the sources by choosing the appropriate threshold. Although we have applied the algorithm only for the DoA problem, the algorithm can be used for any linear model with Gaussian noise problem. 
\bibliographystyle{IEEEtran}
\bibliography{IEEEabrv,bibone}

\begin{thebibliography}{10}
\providecommand{\url}[1]{#1}
\csname url@samestyle\endcsname
\providecommand{\newblock}{\relax}
\providecommand{\bibinfo}[2]{#2}
\providecommand{\BIBentrySTDinterwordspacing}{\spaceskip=0pt\relax}
\providecommand{\BIBentryALTinterwordstretchfactor}{4}
\providecommand{\BIBentryALTinterwordspacing}{\spaceskip=\fontdimen2\font plus
\BIBentryALTinterwordstretchfactor\fontdimen3\font minus
  \fontdimen4\font\relax}
\providecommand{\BIBforeignlanguage}[2]{{%
\expandafter\ifx\csname l@#1\endcsname\relax
\typeout{** WARNING: IEEEtran.bst: No hyphenation pattern has been}%
\typeout{** loaded for the language `#1'. Using the pattern for}%
\typeout{** the default language instead.}%
\else
\language=\csname l@#1\endcsname
\fi
#2}}
\providecommand{\BIBdecl}{\relax}
\BIBdecl

\bibitem{van2002detection}
H.~L. Van~Trees, \emph{Detection, estimation, and modulation theory. Part {IV}.
  , Optimum array processing}.\hskip 1em plus 0.5em minus 0.4em\relax New York:
  Wiley-Interscience, 2002.

\bibitem{four}
Z.~{Yang}, J.~{Li}, P.~{Stoica}, and L.~{Xie}, ``{Sparse Methods for
  Direction-of-Arrival Estimation},'' \emph{ArXiv e-prints}, Sep. 2016.

\bibitem{complexlars}
A.~Panahi and M.~Viberg, ``Fast candidate points selection in the lasso path,''
  \emph{IEEE Signal Processing Letters}, vol.~19, no.~2, pp. 79--82, Feb 2012.

\bibitem{arx}
P.~Gerstoft, A.~Xenaki, and C.~F. Mecklenbräuker, ``Multiple and single
  snapshot compressive beamforming,'' \emph{The Journal of the Acoustical
  Society of America}, vol. 138, no.~4, pp. 2003--2014, 2015.

\bibitem{fuchspaper}
J.~Fuchs, ``The generalized likelihood ratio test and the sparse
  representations approach,'' in \emph{Image and Signal Processing}, ser.
  Lecture Notes in Computer Science.\hskip 1em plus 0.5em minus 0.4em\relax
  Springer Berlin / Heidelberg, 2010, vol. 6134, pp. 245--253.

\bibitem{kaydetection}
S.~M. Kay, \emph{{Fundamentals of Statistical Signal Processing, Volume 2:
  Detection Theory}}.\hskip 1em plus 0.5em minus 0.4em\relax New Jersey:
  Prentice-Hall Inc, 1993.

\bibitem{cs_cv}
P.~Boufounos, M.~F. Duarte, and R.~G. Baraniuk, ``Sparse signal reconstruction
  from noisy compressive measurements using cross validation,'' in
  \emph{IEEE/SP 14th Workshop on Statistical Signal Processing, 2007}.\hskip
  1em plus 0.5em minus 0.4em\relax IEEE, 2007, pp. 299--303.

\bibitem{Mosesequivalence}
C.~D. Austin, R.~Moses, J.~Ash, and E.~Ertin, ``On the relation between sparse
  reconstruction and parameter estimation with model order selection,''
  \emph{IEEE Journal of Selected Topics in Signal Processing}, vol.~4, no.~3,
  pp. 560--570, 2010.

\bibitem{mdl}
A.~Panahi and M.~Viberg, ``Maximum a posteriori based regularization parameter
  selection,'' in \emph{2011 IEEE International Conference on Acoustics, Speech
  and Signal Processing (ICASSP)}.

\bibitem{OSMDL}
E.~Fishler and H.~Messer, ``On the use of order statistics for improved
  detection of signals by the mdl criterion,'' \emph{IEEE Transactions on
  Signal Processing}, vol.~48, no.~8, pp. 2242--2247, Aug 2000.

\bibitem{buckley}
B.~M. Radich and K.~M. Buckley, ``Single-snapshot {DOA} estimation and source
  number detection,'' \emph{IEEE Signal Processing Letters}, vol.~4, no.~4, pp.
  109--111, April 1997.

\bibitem{mst}
T.~L. Lai and C.~Z. Wei, ``Least squares estimates in stochastic regression
  models with applications to identification and control of dynamic systems,''
  \emph{Ann. Statist.}, vol.~10, no.~1, pp. 154--166, 03 1982.

\bibitem{chenbp}
\BIBentryALTinterwordspacing
S.~S. Chen, D.~L. Donoho, and M.~A. Saunders, ``Atomic decomposition by basis
  pursuit,'' \emph{SIAM Rev.}, vol.~43, no.~1, pp. 129--159, Jan. 2001.
  [Online]. Available: \url{http://dx.doi.org/10.1137/S003614450037906X}
\BIBentrySTDinterwordspacing

\bibitem{bnbut}
B.~N. Bhaskar, G.~Tang, and B.~Recht, ``Atomic norm denoising with applications
  to line spectral estimation,'' \emph{IEEE Transactions on Signal Processing},
  vol.~61, no.~23, pp. 5987--5999, Dec 2013.

\bibitem{siglass}
R.~Lockhart, J.~Taylor, R.~J. Tibshirani, and R.~Tibshirani, ``A significance
  test for the lasso,'' \emph{Annals of statistics}, vol.~42, no.~2, pp.
  413--468, 2014.

\bibitem{elsTLS}
R.~Arablouei, ``Fast reconstruction algorithm for perturbed compressive sensing
  based on total least-squares and proximal splitting,'' \emph{Signal
  Process.}, vol. 130, no.~C, pp. 57--63, Jan. 2017.

\bibitem{gengmm}
G.~Tang, B.~Bhaskar, P.~Shah, and B.~Recht, ``Compressed sensing off the
  grid,'' \emph{Information Theory, IEEE Transactions on}, vol.~59, no.~11, pp.
  7465--7490, Nov 2013.

\bibitem{Teke-2014}
\BIBentryALTinterwordspacing
O.~Teke, A.~C. Gurbuz, and O.~Arikan, ``A robust compressive sensing based
  technique for reconstruction of sparse radar scenes,'' \emph{Digit. Signal
  Process.}, vol.~27, pp. 23--32, Apr. 2014. [Online]. Available:
  \url{http://dx.doi.org/10.1016/j.dsp.2013.12.008}
\BIBentrySTDinterwordspacing

\bibitem{l1svd}
D.~Malioutov, M.~Cetin, and A.~S. Willsky, ``A sparse signal reconstruction
  perspective for source localization with sensor arrays,'' \emph{IEEE
  Transactions on Signal Processing}, vol.~53, no.~8, pp. 3010--3022, Aug 2005.

\bibitem{ssmusic}
\BIBentryALTinterwordspacing
W.~Liao and A.~Fannjiang, ``\{MUSIC\} for single-snapshot spectral estimation:
  Stability and super-resolution,'' \emph{Applied and Computational Harmonic
  Analysis}, vol.~40, no.~1, pp. 33 -- 67, 2016. [Online]. Available:
  \url{http://www.sciencedirect.com/science/article/pii/S1063520314001432}
\BIBentrySTDinterwordspacing

\bibitem{ssdoar}
P.~H\"acker and B.~Yang, ``Single snapshot {DOA} estimation,'' \emph{Advances
  in Radio Science}, vol.~8, pp. 251--256, 2010.

\bibitem{CBP}
C.~Ekanadham, D.~Tranchina, and E.~Simoncelli, ``Recovery of sparse
  translation-invariant signals with continuous basis pursuit,'' \emph{IEEE
  Transactions on Signal Processing}, vol.~59, no.~10.

\bibitem{GLTLS}
H.~Zhu, G.~Leus, and G.~Giannakis, ``Sparsity-cognizant total least-squares for
  perturbed compressive sampling,'' \emph{IEEE Transactions on Signal
  Processing}, vol.~59, no.~5, pp. 2002 --2016, may 2011.

\bibitem{RJ13}
R.~Jagannath and K.~V.~S.~. Hari, ``Block sparse estimator for grid matching in
  single snapshot {DoA} estimation,'' \emph{Signal Processing Letters, IEEE},
  vol.~20, no.~11, pp. 1038--1041, Nov 2013.

\bibitem{CalBnk}
Y.~Chi, A.~Pezeshki, L.~Scharf, and R.~Calderbank, ``Sensitivity to basis
  mismatch in compressed sensing,'' in \emph{IEEE International Conference on
  Acoustics Speech and Signal Processing (ICASSP), 2010}, march 2010, pp. 3930
  --3933.

\bibitem{TS}
M.~Herman and T.~Strohmer, ``General deviants: An analysis of perturbations in
  compressed sensing,'' \emph{IEEE Journal of Selected Topics in Signal
  Processing}, vol.~4, no.~2, pp. 342 --349, april 2010.

\bibitem{bcrbeiv}
S.~Bernhardt, R.~Boyer, S.~Marcos, and P.~Larzabal, ``Compressed sensing with
  basis mismatch: Performance bounds and sparse-based estimator,'' \emph{IEEE
  Transactions on Signal Processing}, vol.~64, no.~13, pp. 3483--3494, July
  2016.

\bibitem{bctls}
A.~Beck and A.~Ben-Tal, ``A global solution for the structured total least
  squares problem with block circulant matrices,'' \emph{SIAM Journal on Matrix
  Analysis and Applications}, vol.~27, no.~1, pp. 238--255, 2005.

\bibitem{van2002optimum}
H.~L. Van~Trees, ``Optimum array processing (detection, estimation, and
  modulation theory, part {IV}),'' \emph{Wiley-Interscience}.

\bibitem{Gupte}
\BIBentryALTinterwordspacing
S.~S. Gupta and M.~Sobel, ``On the smallest of several correlated $f$
  statistics,'' \emph{Biometrika}, vol.~49, no. 3/4, pp. 509--523, 1962.
  [Online]. Available: \url{http://www.jstor.org/stable/2333984}
\BIBentrySTDinterwordspacing

\bibitem{osyoung}
\BIBentryALTinterwordspacing
D.~H. Young, ``Recurrence relations between the p.d.f.'s of order statistics of
  dependent variables, and some applications,'' \emph{Biometrika}, vol.~54, no.
  1/2, pp. 283--292, 1967. [Online]. Available:
  \url{http://www.jstor.org/stable/2333871}
\BIBentrySTDinterwordspacing

\bibitem{RJ16lassodistarx}
R.~{Jagannath} and N.~S. {Upadhye}, ``{The LASSO Estimator: Distributional
  Properties},'' \emph{ArXiv e-prints}, May 2016.

\bibitem{glassoorigin}
\BIBentryALTinterwordspacing
M.~Yuan and Y.~Lin, ``Model selection and estimation in regression with grouped
  variables,'' \emph{Journal of the Royal Statistical Society: Series B
  (Statistical Methodology)}, vol.~68, no.~1, pp. 49--67, 2006. [Online].
  Available: \url{http://dx.doi.org/10.1111/j.1467-9868.2005.00532.x}
\BIBentrySTDinterwordspacing

\bibitem{arnold}
B.~Arnold, N.~Balakrishnan, and H.~Nagaraja, \emph{A First Course in Order
  Statistics}, ser. Classics in Applied Mathematics.\hskip 1em plus 0.5em minus
  0.4em\relax SIAM.

\bibitem{mvray}
R.~K. Mallik, ``On multivariate rayleigh and exponential distributions,''
  \emph{IEEE Transactions on Information Theory}, vol.~49, no.~6, pp.
  1499--1515, June 2003.

\bibitem{dehaan}
L.~de~Haan and A.~Ferreira, \emph{Extreme Value Theory: An Introduction}, ser.
  Springer Series in Operations Research and Financial Engineering.\hskip 1em
  plus 0.5em minus 0.4em\relax Springer New York, 2006.

\bibitem{weiss}
I.~Weissman, ``{Estimation of parameters and large quantiles based on the k
  largest observations},'' \emph{J. Amer. Statist. Assoc.}, vol.~73, no. 364,
  pp. 812--815, 1978.

\end{thebibliography}
\section{Appendix}
\subsection{Proof of Theorem-\ref{Finitesamplecov}}\label{A1}
In the moderate SNR regime, $\mathsmaller{\frac{\tau_{j}} {\sigma}, j=S+1,S+2,\ldots,M}$ are the order statistics of Rayleigh random variable with p.d.f $f(x)$ and c.d.f $F(x) = 1-\exp(-x^2/2)$. Defining $M-S = n$ and $V_j = \tau_{S+j}/\sigma$, we have $\mathsmaller{V_n\leq\ldots\leq V_{j}\leq\ldots V_{1}}$. Defining $V_{j} = X_{n+1-i}$, we have $\mathsmaller{X_1\leq\ldots X_{i}\leq\ldots\leq X_n}$.

We first require the joint pdf of $V_1,V_2$ or $X_n,X_{n-1}$. The joint pdf of consecutive order statistics is \cite[Chapter-2]{arnold}
\begin{equation*}
f_{X_{k},X_{k+1}}(x,y) = C_0\{F(x)\}^{k-1}\{1-F(y)\}^{n-k-1}f(x)f(y),
\end{equation*}
where $C_0 = \frac{n!}{(k-1)!(n-k-1)!}$. Substituting $k =n-1$,
\begin{equation*}
f_{X_{n-1},X_{n}}(x,y) = C\{F(x)\}^{n-2}f(x)f(y),0<x<y<\infty,
\end{equation*}
where $\mathsmaller{C = \frac{n!}{(n-2)!}}$. The joint pdf of $X_{n}$ and $\mathsmaller{W = X_{n}-X_{n-1}}$ is,
\begin{equation*}
\resizebox{\hsize}{!}{$f_{W,X_n}(w,y) = C\{F(y-w)\}^{n-2}f(y-w)f(y),0<w<y<\infty$}.
\end{equation*}
Now, the joint p.d.f of $X_n$ and $T_{S+1} = X_nW$ is,
\begin{equation*}
\resizebox{\hsize}{!}{$f_{T_{S+1},X_n}(t,y) = C\{F(y-t/y)\}^{n-2}f(y-t/y)f(y)\frac{1}{y},0<t<y^2<\infty$}.
\end{equation*}
Finally the p.d.f of $T_{S+1}$ is obtained by integration of the above equation w.r.t. $y$. Hence,
\begin{equation*}
f_{T_{S+1}}(t) = \int_{\sqrt{t}}^{\infty}C\{F(y-t/y)\}^{n-2}f(y-t/y)f(y)\frac{1}{y}dy.
\end{equation*}
Now the cdf of the co-variance test statistics is,
\begin{align*}
&F_{T_{S+1}}(\eta) = \INT{0}{\eta}{\INT{\sqrt{t}}{\infty}{C\{F(y-t/y)\}^{n-2}f(y-t/y)f(y)\frac{1}{y}dydt}},\\
&=\INT{0}{\sqrt{\eta}}{\INT{0}{\hspace{2mm}y^2}{C\{F(y-t/y)\}^{n-2}f(y-t/y)f(y)\frac{1}{y}dtdy}},\\
&+\INT{\sqrt{\eta}}{\infty}{\INT{0}{\eta}{C\{F(y-t/y)\}^{n-2}f(y-t/y)f(y)\frac{1}{y}dtdy}},\\
&= 1-n\INT{\sqrt{\eta}}{\infty}{y\exp(-y^2/2)\{1-\exp\frac{-(y-\eta/y)^2}{2}\}^{n-1}dy}.
\end{align*}
\subsection{Proof of Theorem-\ref{oversamp}}\label{A3}
Assuming event $B$ is true and $S$ sources. Let $J = \{j_{1},\ldots,j_{S}\}$ be the the active set, after $S$ knot points. Now, at the $(S+1)^{st}$ knot point, $\tau_{S+1} = \max\limits_{k\notin J}\Lambda_{k}$, $\Lambda_{k} = |\a_{k}^{H}(\b-\A_{J}\xhat_{J})|$ for some $k\in J^{c}$ and $\xhat_{J}$ satisfies $\Lambda_{k}\o =|\A_{J}^{H}(\b-\A_{J}\xhat_{J})|,\hspace{1mm}k\in J^{c}$. Hence, we obtain the following set of $|J|$ equations for $\xhat$
\begin{equation}
|\a_{k}^{H}(\b-\A_{J}\xhat_{J})| = |\a_{j_{i}}^{H}(\b-\A_{J}\xhat_{J})|\hspace{1mm}\forall j_{i}\in J, k\in J^{c}.
\end{equation}
Solving for $\xhat$ from the above equations and substituting back in the expression for $\Lambda_{r}$, we obtain
$\Lambda_{r} = |\a_{r}^{H}\Q_{M-S}\v|,r\in J^{c}$, where $\Q_{M-S}$ is a projection matrix with $S$ zero eigen values. Since, $\v$ is a complex Gaussian random variable with zero mean and variance $\sigma^2$, each $\Lambda_{r}^{2}/\sigma^{2}$ are correlated $\chi^{2}$ random variables. Hence, the test $D_{1}$ is a maximum of correlated $\chi^{2}$ random variables whose c.d.f is given by, 
\begin{flalign*}
&=F_{D_{1}}(\eta) = \pr(D_{1}\leq\eta)=\pr\left( \max_{r\in J^{c}}\Lambda^{2}_{r}/\sigma^2\leq\eta\right),&\\
&=\pr(\Lambda_{1}^{2}/\sigma^2\leq\eta,\ldots,\Lambda_{|J|^{c}}^{2}/\sigma^2\leq\eta)
=\INT{0}{\eta}{f_{\u}(\u)\mathrm{d}(\u)},&\\
&\stackrel{(a)}=\INT{0}{\infty}{f_{\u}(\u)\mathbb{I}(\u,\eta)d(\u)}\stackrel{(b)}=\INT{0}{\infty}{\hat{f}_{\z}(\z)\PROD{i}{1}{M-S}{\frac{(1-e^{-j\eta z_{i}})}{jz_{i}}}d\z},&\\
&=\INT{0}{\infty}{(\mathrm{det}(\I-j\mathrm{Diag}(\z)\R_{\mathsmaller{M-S}}))^{-1}\PROD{i}{1}{M-S}{\frac{(1-e^{-j\eta z_{i}})}{jz_{i}}}d\z},&\\
&=\PROD{i}{1}{M-S}{\INT{z_i=0}{\infty}{\hspace{2mm}\frac{(1-e^{-j\eta z_{i}})}{jz_{i}(1-j\varrho_{i}z_{i})}}dz_i}
=\PROD{i}{1}{M-S}{(1-e^{-\eta/\varrho_{i}})}.&
\end{flalign*}
In the above equations, $\mathbb{I}(\u,\eta)=[0,\eta]^{M-S}$, $f_{\u}$ denotes the joint p.d.f of $\Lambda_{r},r\in J^{c}$ in (a) and is degenerate because $\Q_{\mathsmaller{M-S}}$ is singular. Hence, we use the Parseval theorem to obtain (b) and then the characteristic function of correlated $\chi^2$ random variables from \cite{mvray} to evaluate the c.d.f. 
\subsection{Proof of Theorem-\ref{testGMM}}\label{A4}
Assuming event $B$ is true and $S$ sources. Let $J = \{j_{1},\ldots,j_{S}\}$ be the the active group, after $S$ knot points. Now, at the $(S+1)^{st}$ knot point, $\tau_{S+1} = \max\limits_{k\notin J}\Lambda_{k}$, $\Lambda_{k} = \|\P_{k}^{H}(\overline{\b}-\P_{j_{i}}\yhat_{J})\|_{2}$ for all $j_{i}\in J$ and for some $k\in J^{c}$ and $\yhat_{J}$ satisfies $\Lambda_{k} =\|\P_{J}^{H}(\overline{\b}-\P_{J}\yhat_{J})\|_{2}$, for the chosen $k$. Hence, we obtain the following set of equations for $\yhat$
\begin{equation*}
\|\P_{k}^{H}(\overline{\b}-\P_{J}\yhat_{J})\|_{2} = \|\P_{j_{i}}^{H}(\overline{\b}-\A_{J}\yhat_{J})\|_{2}\hspace{1mm}\forall j_{i}\in J, k\in J^{c}.
\end{equation*}
Solving for $\yhat$ from the above equations and substituting back in the expression for $\Lambda_{r}$, we obtain
$\Lambda_{r} = \|\P_{r}^{H}\Q_{2(M-S)}\v\|,r\in J^{c}$, where $\Q_{2(M-S)}$ is a projection matrix with $2S$ zero eigen values. Since, $\v$ is a complex Gaussian random variable with zero mean and variance $\sigma^2$, each $\Lambda_{r}^{2}/\sigma^{2}$ is a correlated $\chi^{2}$ random variable. Hence, the test $E_{1}$ is a maximum of correlated $\chi^{2}$ random variables whose c.d.f is given by, 
\begin{flalign*}
&=F_{E_{1}}(\eta) = \pr(E_{1}\leq\eta)=\pr(\max_{r\in J^{c}}\Lambda^{2}_{r}/\sigma^2\leq\eta),&\\
&=\pr(\Lambda_{1}^{2}/\sigma^2\leq\eta,\ldots,\Lambda_{|J|^{c}}^{2}/\sigma^2\leq\eta)
=\INT{0}{\eta}{f_{\u}(\u)\mathrm{d}(\u)},&\\
&\stackrel{(a)}=\INT{0}{\infty}{f_{\u}(\u)\mathbb{I}(\u,\eta)d(\u)}\stackrel{(b)}=\INT{0}{\infty}{\hat{f}_{\z}(\z)\PROD{i}{1}{M-S}{\frac{(1-e^{-j\eta z_{i}})}{jz_{i}}}d\z},&\\
&=\INT{0}{\infty}{(\mathrm{det}(\I-j\mathrm{Diag}(\z)\R_{\mathsmaller{M-S}})^{-1}(\mathrm{det}(\I-j\mathrm{Diag}(\z)\T_{\mathsmaller{M-S}})^{-1}}&\\
&\hspace{1cm}\times\PROD{i}{1}{M-S}{\frac{(1-e^{-j\eta z_{i}})}{jz_{i}}}d\z &\\
&=\PROD{i}{1}{M-S}{\INT{z_i=0}{\infty}{\hspace{2mm}\frac{(1-e^{-j\eta z_{i}})}{jz_{i}(1-j\varrho_{i}z_{i})(1-j\varepsilon_{i}z_{i})}}dz_i}&
\end{flalign*}
In the above equations $\mathbb{I}(\u,\eta)$ denotes a unit box from $0$ to $\eta$, $f_{\u}$ denotes the joint p.d.f of $\Lambda_{r},r\in J^{c}$ in (a) and is degenerate because $\Q_{\mathsmaller{M-S}}$ is singular. Hence, we use the Parseval theorem to obtsin (b) and then the characteristic function of correlated $\chi^2$ random variables from \cite{mvray} to evaluate the c.d.f.
\subsection{Proof of Theorem-\ref{asympcov}}\label{A2}
We note that Rayleigh random variables ($V_{i}$) satisfy the Von-Mises condition. Hence $\exists$ constants $a_{M} = F^{-1}(1-1/M) = \sqrt{2\log(M)}$ and $b_M = pF^{\prime}(a_{M}) = \sqrt{2\log(M)}$ s.t. $b_{M}(\frac{V_1}{\sigma}-a_{M})\xrightarrow{d}-\log(E_{0})$, where $-\log E_{0}$ has type I extreme value distribution \cite{siglass,dehaan}. From \cite{weiss}, for any fixed $l\geq 1$, the random variables $ W_{0} = b_{M}(\frac{V_{l+1}} {\sigma}-a_{M})$ and $W_{i} = b_{M}(\frac{(V_i-V_{i+1})}{\sigma}), i = 1,\ldots,l$ converge jointly as $(W_0,W_1,W_2,\ldots,W_l)\xrightarrow{d}$ $(\log G_{0},E_{1}/1,E_{2}/2,\ldots,E_{l}/l)$, where $G_{0},E_{1},\ldots,E_{l}$ are independent and $G_{0}$ is Gamma distributed with scale parameter $1$ and shape parameter $l$, and $E_{1},\ldots,E_{l}$ are standard exponentials. We have,
\begin{align*}
T_{\mathsmaller{S+k}}&= \frac{V_k}{\sigma^2}(V_{k}-V_{k+1}) = \Bigg(a_{M}+\frac{W_0}{b_{M}}+\SUM{j}{k}{l}
{\hspace{1mm}\frac{W_{j}}{b_{M}}}\Bigg)\frac{W_{k}}{b_{M}},\\
&=\frac{a_{M}}{b_{M}}W_{k} +\frac{1}{b^{2}_{M}}\Bigg(W_0+\SUM{j}{k}{l}{\hspace{1mm}W_j}\Bigg)W_k,\\
&= W_k + \frac{1}{2\log(M)}\Bigg(W_0+\SUM{j}{k}{l}{\hspace{1mm}W_j}\Bigg)W_k.
\end{align*}
Hence $T_{\mathsmaller{S+k}}$ converges pointwise to $W_{k}$ which converges in distribution to Exp$(1/k)$ as $M\rightarrow\infty$.
\subsection{Proof of Lemma-\ref{lem1}}\label{A0}
Here we show that $\pr(B)\rightarrow 1$ holds in the moderate to high SNR regime (when $\theta = \min\limits_{j\in \widetilde{T}}x_j\gg\sigma$). We choose $\epsilon$ s.t. $\epsilon\gg\sigma$ and $\theta\gg\epsilon$. Now, the knots 
$\tau_{k}, k=1,2,\ldots S$ are independent Rician random variables. Hence,
\begin{equation*}
\pr\Big(\min_{k\in \widetilde{T}}\tau_k\geq\epsilon\Big)= \PROD{k}{1}{S} {\pr\Big(\tau_k\geq\epsilon\Big)}
\geq\PROD{k}{1}{S}{\mathcal{Q}_{1}\Big(\frac{\theta}{\sigma},\frac{\eta}{\sigma}\Big)}
\end{equation*}
Where, $\mathcal{Q}_{1}\Big(\frac{\theta}{\sigma},\frac{\epsilon}{\sigma}\Big)$ is the Marcum $Q$ function, which tends to $1$ as $\frac{\theta}{\epsilon}$ tends to infinity. Hence $\pr\Big(\min_{k\in \widetilde{T}}\tau_k\geq\eta\Big)\rightarrow 1$ for large $\frac{\theta}{\eta}$. Also simultaneously, we note that $\tau_{k}, k=S+1,S+2,\ldots M$ are i.i.d. Rayleigh random variables, hence $\pr\Big(\max_{k\notin \widetilde{T}}\tau_k\leq \eta\Big) = (1-\exp(\frac{-\eta^{2}}{2\sigma^2}))^{M-S}$ which tends to $1$ as $\frac{\eta}{\sigma}\rightarrow\infty$. Hence, $\pr\Big(\max_{k\notin \widetilde{T}}\tau_k\leq \eta\Big)\rightarrow 1$ for large $\frac{\eta}{\sigma}$. So, we can conclude that $\pr(B)\rightarrow 1$ for large $\frac{\theta}{\sigma}$.
\end{document}